\Crefname{figure}{Figure}{Figures}
\crefname{figure}{Figure}{Figures}
\crefname{example}{Example}{Example}
\crefname{theorem}{Theorem}{Theorem}
\crefname{corollary}{Corollary}{Corollary}
\crefname{lemma}{Lemma}{Lemma}
\crefname{proposition}{Proposition}{Proposition}
\crefname{assumption}{Assumption}{Assumption}
\crefname{section}{Section}{Section}
\crefname{algorithm}{Algorithm}{Algorithm}
\declaretheorem[name=Theorem,numberwithin=section]{theorem}
\declaretheorem[name=Definition,style=definition]{definition}
\declaretheorem[name=Proposition,numberlike=theorem]{proposition}
\declaretheorem[name=Remark,style=definition,numberwithin=section]{remark}
\def\eqref#1{equation~\ref{#1}}
\def\1{\bm{1}}
\def\rmA{{\mathbf{A}}}
\def\rmB{{\mathbf{B}}}
\def\rmE{{\mathbf{E}}}
\def\rmH{{\mathbf{H}}}
\def\rmR{{\mathbf{R}}}
\def\rmT{{\mathbf{T}}}
\def\ve{{\bm{e}}}
\def\vh{{\bm{h}}}
\def\vr{{\bm{r}}}
\def\vt{{\bm{t}}}
\DeclareMathAlphabet{\mathsfit}{\encodingdefault}{\sfdefault}{m}{sl}
\SetMathAlphabet{\mathsfit}{bold}{\encodingdefault}{\sfdefault}{bx}{n}
\newcommand{\R}{\mathbb{R}}
\title{MQuinE: a cure for ``Z-paradox'' in knowledge graph embedding models}
\author{%
  Yang Liu, Huang Fang \\
  Cognitive Computing Lab, Baidu Research\\
  No.10 Xibeiwang East Road, Beijing 100193, China\\
  \texttt{ \{liuyang173, fanghuang\}@baidu.com } \\
  \AND
  Yunfeng Cai, Mingming Sun \\
  Beijing Institute of Mathematical Sciences and Applications \\
  Huairou District, Beijing, 101408, China \\
  \texttt{\{caiyunfeng, sunmingming\}@bimsa.cn} \\
}
\begin{document}

\maketitle

\begin{abstract}
  Knowledge graph embedding (KGE) models achieved state-of-the-art results on many knowledge graph tasks including link prediction and information retrieval. Despite the superior performance of KGE models in practice, we discover a deficiency in the expressiveness of some popular existing KGE models called \emph{Z-paradox}. Motivated by the existence of Z-paradox, we propose a new KGE model called \emph{MQuinE} that does not suffer from Z-paradox while preserves strong expressiveness to model various relation patterns including symmetric/asymmetric, inverse, 1-N/N-1/N-N, and composition relations with theoretical justification. Experiments on real-world knowledge bases indicate that Z-paradox indeed degrades the performance of existing KGE models, and can cause more than 20\% accuracy drop on some challenging test samples. Our experiments further demonstrate that MQuinE can mitigate the negative impact of Z-paradox and outperform existing KGE models by a visible margin on link prediction tasks.
\end{abstract}

\section{Introduction}
Knowledge graphs (KGs) consist of many facts that connect real-world entities (e.g., humans, events, words, etc.) with various relations. Each fact in a knowledge graph is usually represented as a triplet $(h, r, t)$, where $h, t$ are respectively the head and tail entities and $r$ is the relation; the triplet $(h, r, t)$ indicates that the head entity $h$ has the relation $r$ to the tail entity $t$. Due to the prevalence of relational data in practice, KG has a wide range of applications including recommendation systems \citep{zhang2016collaborative, wang2018dkn, ma2019jointly, wang2019kgat}, 
natural language processing (NLP) \citep{sun2018logician,li2019integration}, question answering (QA) \citep{lukovnikov2017neural,huang2019knowledge} and querying \citep{chen2022fuzzy}. 

Embedding-based models \citep{BengioDVJ03,BleiNJ03} have revolutionized certain fields of machine learning including KG in the past two decades. Simply speaking, knowledge graph embedding (KGE) models map each entity and relation into a vector or matrix and calculate the probability of a fact triple through some score functions. KGE models are space and time efficient. More importantly, KGE models such as TransE \citep{bordes2013translating}, RotatE \citep{sun2019rotate}, OTE \citep{tang2019orthogonal}, etc., are quite expressive; it was shown that KGE models if designed carefully, can capture various relation patterns including symmetry/asymmetry, inversion, composition, injective and non-injective relations. Due to the efficiency and expressiveness of KGE models, embedding-based models have achieved state-of-the-art performance on many KG applications and are widely deployed in practice.

Despite the popularity of KGE models on various KG applications. In this work, we discover a bottleneck, termed as ``Z-paradox'', on the expressiveness of some existing KGE models. We give a short description and illustration of Z-paradox in the following paragraph and present its formal definition and some related properties in \Cref{sec:MquinE}.

\begin{figure}[t]
    \centering
    \includegraphics[width=1.0\textwidth]{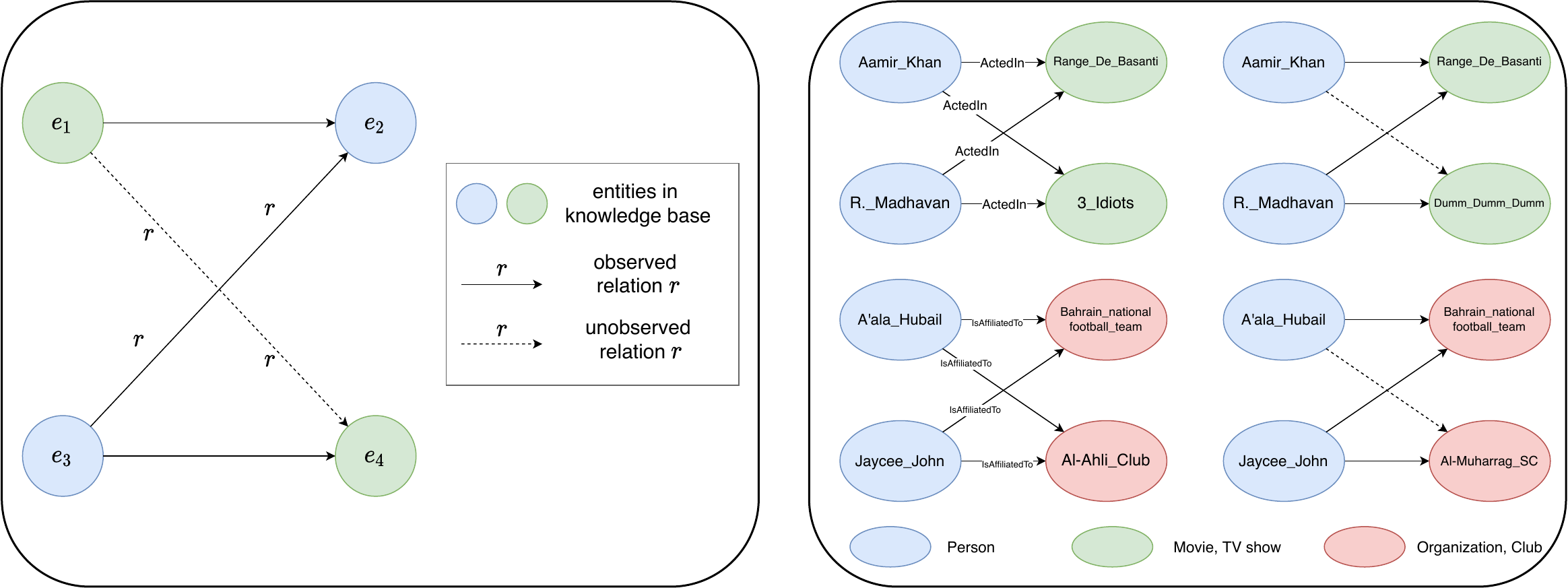}
    \caption{Left panel: an illustration of Z-paradox. Right panel: an illustration of Z-paradox in the YAGO3-10 dataset.}
    \label{fig:example}
    \vspace{-15pt}
\end{figure}


{\bf Z-paradox.}\;
Though popular KGE models (e.g., TransE, RotatE, OTE) have already taken various relation patterns into account, there are still limitations.
In what follows we introduce a limitation of popular KGE models.
Specifically, in \Cref{fig:example}, there are four entities $e_1, e_2, e_3, e_4$, with
$e_1$ linking to $e_2$, $e_3$ linking to both $e_2$ and $e_4$.
The task is to determine whether $e_1$ links to $e_4$ or not. 
A good KGE model should permit both scenarios.
However, we find that many popular KGE models such as TransE \citep{bordes2013translating} and RotatE \citep{sun2019rotate} would guarantee $e_1$ links to $e_4$ regardless of whether $e_1$ actually has relation $r$ to $e_4$ or not. We term this phenomenon \emph{Z-paradox} due to the graph structure in~\Cref{fig:example}. 
To be more concrete, let us take some examples from the YAGO3-10 knowledge base, where some actors and movies are connected by the relation \emph{ActedIn}.
We illustrate the phenomenon via
Figure~\Cref{fig:example},
where the dotted arrow means that 
popular KGE models infer that the arrowtail links to the arrowhead, which contradicts with the true facts.
In the upper left plot,
we observe that \emph{Aamir\_Khan} has acted in \emph{Range\_De\_Basanti}, \emph{R.\_Madhaven} has acted in \emph{Range\_De\_Basanti} and \emph{3\_Idiots}, then the KGE model would infer that \emph{Aamir\_Khan} has acted in \emph{3\_Idiots}, which is correct.
However, in the upper right plot, 
we observe that \emph{Aamir\_Khan} has acted in \emph{Range\_De\_Basanti}, \emph{R.\_Madhaven} has acted in \emph{Range\_De\_Basanti} and \emph{Dumm\_Dumm\_Dumm}, then the KGE model would also infer that \emph{Aamir\_Khan} has acted in \emph{Dumm\_Dumm\_Dumm}, which is incorrect.
The same phenomenon occurs in the lower left and right plots.

We demonstrate that the Z-pattern is indeed a serious issue for standard KG benchmark datasets. For example, about 35\% of the test facts in the FB15k-237 dataset are negatively affected by the Z-pattern, and KGE models such as TransE and RotatE can suffer more than 20\% accuracy drop on these test facts; see~\Cref{exp:Zpatterns} and \Cref{exp:CaseStudy} for details. To mitigate the negative impact of the Z-pattern, we propose a new KGE model to overcome the \emph{Z-paradox}.
Moreover, the new model can ensure both the robust expressiveness and the ability to model various relation patterns, i.e., preserves the good properties of existing KGE models. The new model embeds a triplet $(h,r,t)$ by five matrices $(\rmH, \langle \rmR^h, \rmR^t, \rmR^c \rangle, \rmT)$, where the matrices $\rmH, \rmT$ denote the embeddings of the head entity $h$ and tail entity $t$ respectively, the matrix triplet $\langle \rmR^h, \rmR^t, \rmR^c \rangle$ represents the embedding of the relation $r$.
We term the new model \textbf{M}atrix \textbf{Quin}tuple \textbf{E}mbedding (MQuinE). We show that MQuinE enjoys good theoretical properties and can achieve promising empirical results. Compared with existing KGE models on the challenging FB15k-237 dataset, MQuinE obtains a 10\% improvement of Hit@10 on test facts that are negatively impacted by the Z-pattern, and attains 7\% and 4\% overall improvement of Hit@1 and Hit@10 on all test facts.

Our contributions are summarized as follows:

\noindent{\bf 1)}\; A newly-defined phenomenon in the knowledge graph named \emph{Z-paradox} has been discovered and we prove that existing translation-based KGE models all suffer from Z-paradox. 
Theoretically, we present a necessary condition for the occurrence of Z-paradox. 
    
\noindent{\bf 2)}\; We propose {MQuinE}, a new KGE model that is free from Z-paradox meanwhile can still model complex relations including symmetric/asymmetric, inverse, 1-N/N-1/N-N, and composition relations.

\noindent{\bf 3)}\; Experimental results of {MQuinE} on standard benchmark datasets validate that MQuinE can indeed overcome the negative impact of Z-paradox; MQuinE outperforms existing KGE methods by a large margin on most benchmark datasets.

{\bf Organization.}\; In \Cref{sec:related}, we give an overview of KGE models. 
We introduce the \emph{Z-paradox} bottleneck along with its theoretical properties, and propose our new KGE model {MQuinE} in \Cref{sec:MquinE} and \Cref{sec:methods-2} respectively. In \Cref{sec:exp}, we evaluate the effect of Z-paradox on standard KG benchmarks and empirically compare MQuinE with other competitive baselines. 

{\bf Notation.}\;
We denote the set of entities as $\mathcal{E}$ and the set of relations as $\mathcal{R}$.
Following the conventional notation, we represent a knowledge graph as a set of triplets $\mathcal{O} = \{ (h_i, r_i, t_i) \mid h_i, t_i \in \mathcal{E}, r_i \in \mathcal{R} \}_{i=1}^n$, where $n$ is the number of observed facts. For each entity $e$ and relation $r$, we use their bold version $\ve$ and $\vr$ to denote their embeddings. A KGE model is associated with a score function $s( \cdot ) : \mathcal{E} \times \mathcal{R} \times \mathcal{E} \to \mathbb{R}$. Given a fact $(h, r, t)$, the KGE model tends to predict it to be true if $s( \vh, \vr, \vt )$ is small and false otherwise. We use bold capital letters. e.g., $\rmA, \rmB, \rmH, \rmT$ to denote matrices and use use $\| \cdot \|$ to denote the Euclidean norm of vectors or the Frobenius norm of matrices.

\section{Related works}\label{sec:related}
KGE models received substantial attention and has advanced significantly in the past two decades. Learning graph embeddings that can capture complex relation patterns is the core objective of KGE and many models have been proposed. We summarize some popular KGE models in Appendix (\Cref{table:KGEmethods}). We go through some existing KGE methods and discuss how they relate to our work.


{\bf Translation distance based methods.}\; Translation distance based approaches describe relations assess the plausibility of fact triples by comparing the distances between head's and tail's embeddings following some relation transformations. Inspired by \emph{word2vec} \citep{mikolov2013distributed, mikolov2013efficient}, \cite{bordes2013translating} first introduced the idea of translation invariance into the knowledge graph embedding domain and proposed the TransE model. \citet{sun2019rotate} proposed RotatE and characterized relations as rotations between the head and tail entities in complex space; it was shown that many desirable properties, such as symmetry/asymmetry, inversion, and Abelian composition, can be achieved by RotatE. Later on, OTE \citep{tang2019orthogonal}, DensE \citep{lu2020dense}, HopfE \citep{bastos2021hopfe} are proposed to better model more complex relation patterns. 
Recently, \citet{yu2021mquade} proposed a unified KGE model called MQuadE that can model symmetry/asymmetry, inversion, 1-N/N-1/N-N, and composition relations simultaneously. MQuadE serves as the backbone for the construction of our MQuinE, and is closely related to our work.

{\bf Bilinear semantic matching methods.}\; \citep{nickel2011three} first introduced the idea of tensor decomposition to model triple-relational data.
\citep{yang2014embedding} later proposed a simple and effective bilinear model called DisMult and achieved promising empirical results. Subsequent works such as ComplEX \citep{trouillon2016complex}, TuckER \citep{balavzevic2019tucker}, DihEdral \citep{xu2019relation}, QuatE \citep{zhang2019quaternion} and SEEK \citep{xu2020seek} adopted more complicated bilinear operations to either improve the expressiveness of DisMult or decrease the model complexity.


{\bf Deep learning methods.}\; \citet{vashishth2019composition} proposed COMPGCN to incorporate multi-relational information into graph convolutional networks which leverages a variety of composition operations from knowledge graph embedding techniques to embed both nodes and relations in a graph jointly.
\citep{dettmers2018convolutional} proposed ConvE and used convolutional neural networks to model multi-relational data. 
Subsequent works \citep{nathani2019learning, vashishth2019composition} brought more advanced neural network architectures such as graph convolutional networks and graph attention networks.
More recently, 
\citet{wang2021mixed} proposed M$^2$GNN and embeds entities and relations into the mixed-curvature space with trainable heterogeneous curvatures. 
\citet{zhou2022jointe} proposed JointE and adopted both 1-dimensional and 2-dimensional convolution operations to capture the latent knowledge more carefully. \citet{zhu2021neural}  proposed NBFNet which is a general graph neural network framework that achieves state-of-the-art performance on link prediction tasks. We note that NBFNet is a framework that makes use KGE model as its message function instead of a standalone KGE model. Therefore, we exclude NBFNet from our experiments since the focus of this work is on KGE models. NBFNet can also potentially benefit from using MQuinE as its message function.










\section{Z-paradox and its cure: MQuinE}\label{sec:MquinE}

In this section, we give a formal definition of Z-paradox and propose a new KGE model called MQuinE that can circumvent Z-paradox while having strong expressiveness.

\subsection{Z-paradox}

\begin{definition}[Z-paradox]\label{def:z-paradox}
    Given a KGE model parameterized by $\{ \ve_i \}_{i=1}^{|\mathcal{E}|}, \{ \vr_i \}_{i=1}^{|\mathcal{R}|}$ and a score function $s(\cdot)$ such that $s^* \coloneqq \inf s$.
    For any $ e_1, e_2, e_3, e_4 \in \mathcal{E}, r \in \mathcal{R}$, if 
    \begin{equation}
        s( \ve_1 ,\vr ,\ve_2 ) = s( \ve_3 , \vr ,\ve_2 ) = s( \ve_3 ,\vr ,\ve_4 ) = s^*   \label{eq:links} 
    \end{equation}
    implies that $s( \ve_1 ,\vr ,\ve_4) = s^*$ must hold, then we say the KGE model suffers from Z-paradox.
\end{definition}


Consider a KGE model that suffers from Z-paradox.
If $e_1 \to e_2, e_3 \to e_2, e_3 \to e_4$, i.e., \eqref{eq:links} holds, then $e_1 \to e_4$ must holds regardless the fact is true or not.
It is obvious that a KGE model that suffers from Z-paradox has an inherent deficiency in its expressiveness, and a good KGE model should be able to circumvent Z-paradox. Next, we show that a wide range of existing KGE models indeed suffer from Z-paradox and therefore have limited expressiveness.


\begin{proposition}\label{prop:distanceBasedModels}
    Given a KGE model parameterized by $\{ \ve_i \}_{i=1}^{|\mathcal{E}|}, \{ \vr_i \}_{i=1}^{|\mathcal{R}|}$ and a score function $s(\cdot)$. If $s( \vh, \vr, \vt )$ can be expressed as $\| f( \vh, \vr ) - g( \vt, \vr ) \|$ for some functions $f(\cdot)$ and $g(\cdot)$, and $s^* \coloneqq \inf s = 0$, then the KGE model suffers from Z-paradox.
\end{proposition}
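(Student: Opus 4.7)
The plan is to unfold the hypothesis into three vector equalities in the common codomain of $f$ and $g$, and then chain them together. Since the score function decomposes as $s(\vh,\vr,\vt) = \|f(\vh,\vr) - g(\vt,\vr)\|$, the norm being a norm forces $s(\vh,\vr,\vt) = 0$ to be equivalent to $f(\vh,\vr) = g(\vt,\vr)$.

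First, I would apply this observation to the three hypotheses in Equation \eqref{eq:links} (taking $s^* = 0$ as given) to obtain
\begin{equation*}
    f(\ve_1, \vr) = g(\ve_2, \vr), \qquad f(\ve_3, \vr) = g(\ve_2, \vr), \qquad f(\ve_3, \vr) = g(\ve_4, \vr).
\end{equation*}
Combining the first two gives $f(\ve_1,\vr) = f(\ve_3,\vr)$, and then substituting into the third yields $f(\ve_1,\vr) = g(\ve_4,\vr)$. Therefore
\begin{equation*}
    s(\ve_1, \vr, \ve_4) = \| f(\ve_1, \vr) - g(\ve_4, \vr) \| = 0 = s^*,
\end{equation*}
which is exactly the conclusion demanded by \Cref{def:z-paradox}.

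There is no real obstacle: the entire argument is just transitivity of equality, with the norm only used to convert between "score equals zero" and "arguments coincide". The only subtlety worth flagging in the write-up is why $s^* = 0$ is actually attained (rather than merely approached) at the three given triples — but this is directly built into the hypothesis \eqref{eq:links}, which asserts exact equality to $s^*$, so no limiting argument is needed. After the short derivation above, it would be worth noting that popular translation-based models such as TransE (where $f(\vh,\vr) = \vh + \vr$, $g(\vt,\vr) = \vt$) and RotatE (where $f(\vh,\vr) = \vh \circ \vr$, $g(\vt,\vr) = \vt$) fit this template, so the proposition immediately implies that they all suffer from Z-paradox, matching the claim in the introduction.
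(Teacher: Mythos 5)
Your proof is correct and follows essentially the same route as the paper: both convert $s = 0$ into the equalities $f(\ve_1,\vr)=g(\ve_2,\vr)$, $f(\ve_3,\vr)=g(\ve_2,\vr)$, $f(\ve_3,\vr)=g(\ve_4,\vr)$ and then conclude $f(\ve_1,\vr)=g(\ve_4,\vr)$ — the paper merely writes this last step as a telescoping sum inside the norm, whereas you phrase it as transitivity of equality, which is the same argument.
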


\begin{proof}
    First, we notice that $s( \vh, \vr, \vt ) = s^* = 0$ implies $f( \vh, \vr ) = g(\vt, \vr)$.
    For four entities $e_1,e_2,e_3,e_4$ satisfying $s(\ve_1,\vr,\ve_2) = s(\ve_3,\vr,\ve_2)=s(\ve_3,\vr,\ve_4)=0$, we have
    \begin{equation*}
    f( \ve_1, \vr ) = g( \ve_2, \vr ),\quad
    f(\ve_3, \vr) = g(\ve_2, \vr),\quad
    f(\ve_3, \vr) = g(\ve_4, \vr).
    \end{equation*}
    Then it follows that 
    \begin{align*}
    s(\ve_1, \vr, \ve_4) &= \| f( \ve_1, \vr ) - g( \ve_4, \vr ) \|\\ 
     &= \| [f(\ve_1, \vr) - g(\ve_2, \vr)]  -[f(\ve_3, \vr)
     - g(\ve_2, \vr)] +[f(\ve_3, \vr) - g(\ve_4, \vr)] \|
     =0,
    \end{align*}
        i.e., $(e_1,r,e_4)$ holds. This completes the proof.
    \end{proof}

\begin{remark}
    \Cref{prop:distanceBasedModels} indicates that all existing translation-based KGE models suffer from Z-paradox, including TransE \citep{bordes2013translating}, RotatE \citep{sun2019rotate}, OTE \citep{tang2019orthogonal} and MQuadE \citep{yu2021mquade}, etc.
\end{remark}

Besides translation-based KGE models, KGE models with bilinear score functions may also suffer from Z-paradox under certain conditions.

\begin{remark}
Bilinear score functions can also be transformed into distance-based score functions. For example, we have
\begin{equation*}
2 \mathbf{h}^T \rmR \mathbf{t} = - \| \rmR \mathbf{t} - \mathbf{h} \|^2 + \| \mathbf{h} \|^2 + \| \rmR \mathbf{t} \|^2.
\end{equation*}
When $\mathbf{h},\mathbf{t}$ both have fixed norms and $\rmR$ is a orthogonal matrix, then maximizing $\mathbf{h}^T \rmR \mathbf{t}$ is equivalent to minimizing $\|  \rmR \mathbf{t}- \mathbf{h}\|^2$. Setting $s(h,r,t) = \|\rmR \mathbf{t}- \mathbf{h}\|^2$, by~\Cref{prop:distanceBasedModels}, the model suffers from Z-paradox provided that $\inf s = 0$. The above procedure can be applied to other bilinear KGE models including DisMult \citep{yang2014embedding}, ComplexEX \citep{trouillon2016complex}, DihEdral \citep{xu2019relation}, QuatE \citep{zhang2019quaternion}, SEEK \citep{xu2020seek}, Tucker \citep{balavzevic2019tucker}.
\end{remark}

\begin{table*}[t]
\small
\begin{center}
\begin{tabular}{ccccccc}
\toprule
Model & Sym/Asym & Inversion & Composition & Injective & Non-injective & Z-paradox \\
\midrule
TransE &\XSolidBrush / \CheckmarkBold & \CheckmarkBold & \CheckmarkBold & \CheckmarkBold & \XSolidBrush & \XSolidBrush\\
TransX &\CheckmarkBold / \CheckmarkBold & \XSolidBrush & \XSolidBrush & \CheckmarkBold & \XSolidBrush & \XSolidBrush\\
DisMult &\CheckmarkBold / \XSolidBrush & \XSolidBrush & \XSolidBrush & \CheckmarkBold & \XSolidBrush & \CheckmarkBold\\
ComplEX &\CheckmarkBold / \CheckmarkBold & \CheckmarkBold & \XSolidBrush & \CheckmarkBold & \XSolidBrush & \CheckmarkBold\\
RotatE &\CheckmarkBold / \CheckmarkBold & \CheckmarkBold & \CheckmarkBold & \CheckmarkBold & \XSolidBrush & \XSolidBrush\\
OTE & \CheckmarkBold / \CheckmarkBold & \CheckmarkBold & \CheckmarkBold & \CheckmarkBold & \CheckmarkBold & \XSolidBrush\\
BoxE & \CheckmarkBold / \CheckmarkBold & \CheckmarkBold & \XSolidBrush & \CheckmarkBold & \CheckmarkBold & \CheckmarkBold\\

ExpressivE & \CheckmarkBold / \CheckmarkBold & \CheckmarkBold & \CheckmarkBold & \CheckmarkBold & \CheckmarkBold & \XSolidBrush\\

MQuadE & \CheckmarkBold / \CheckmarkBold & \CheckmarkBold & \CheckmarkBold & \CheckmarkBold & \CheckmarkBold & \XSolidBrush\\
\midrule
\textbf{MQuinE} & \CheckmarkBold / \CheckmarkBold & \CheckmarkBold & \CheckmarkBold & \CheckmarkBold & \CheckmarkBold & \CheckmarkBold \\
\bottomrule
\end{tabular}
\end{center}
\vskip 0.10in
\caption{{The pattern modeling and inference abilities of several models.}}
\label{table:compare}
\end{table*}


\subsection{MQuinE}\label{subsec:relation}
In this section, we introduce our model -- MQuinE;
before that, we review some fundamental relation patterns which need to be considered for KGE models.

{\bf Symmetric/Asymmetric relation.}\; A relation $r$ is symmetric iff the fact triple $(h,r,t)$ holds $\Leftrightarrow$ the fact triple $(t,r,h)$ holds. And a relation $r$ is  asymmetric iff the fact triples $(h,r,t)$ and $(t,r,h)$ do not hold simultaneously.

{\bf Inverse relation.}\; A relation $r_2$ is the inversion of the relation $r_1$ iff the fact triple $(h,r_1,t)$ holds $\Leftrightarrow$ the fact triple $(t,r_2,h)$ holds.

{\bf Relation composition.}\; A relation $r_3$ is the composition of relation $r_1$ and $r_2$ (denoted by $r_3 = r_1 \oplus r_2$) iff the facts $(a,r_1,b)$ and $(b,r_2,c)$ imply the fact $(a,r_3,c)$.

{\bf Abelian (non-Abelian).}\; If $r_1 \oplus r_2 = r_2 \oplus r_1$, the composition $r_1 \oplus r_2$ is Abelian; otherwise, it is non-Abelian.

{\bf 1-N/N-1/N-N relation.}\; A relation $r$ is a 1-N / N-1 relation if there  exist at least two distinct tail/head entities such that $(h,r,t_1),(h,r,t_2)$ / $(h_1,r,t),(h_2,r,t)$ hold.
A relation $r$ is an N-N relation if it is both 1-N and N-1.

Next, we propose MQuinE,
which preserves the aforementioned relation patterns, moreover, circumvents the Z-paradox.
Specifically, for a fact triple $(h,r,t)$, we use
\begin{equation}\label{score}
    s(h,r,t) = \|  \rmH\rmR^h-\rmR^t\rmT + \rmH{\rmR^c}\rmT   \|_F^2
\end{equation}
to measure the plausibility of the fact triple.
Here $\rmH, \rmT\in \R^{d\times d}$ are symmetric matrices and
denote the embeddings of the head entity and tail entity, respectively, and the matrix triple $\langle \rmR^h, \rmR^t, \rmR^c \rangle \in \R^{d\times d}\times \R^{d\times d}\times \R^{d\times d}$ denotes the embedding of the relation $r$.
Specifically, for a true fact triple $(h,r,t)$, we  expect $s(h,r,t) \approx 0$, 
moreover, we hope the score of a false triple to be relatively large.

\subsection{Expressiveness of MQuinE}\label{subsec:method}


In this section, we theoretically show that {MQuinE} is able to model 
symmetric/asymmetric, inverse, 1-N/N-1/N-N, Abelian/non-Abelian compositions relations, more importantly, MQuinE does not suffer from Z-paradox. 

\begin{theorem} \label{thm:sym}
    {MQuinE} can model the symmetry/asymmetry, inverse, 1-N/N-1/N-N relations.
\end{theorem}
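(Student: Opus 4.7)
The plan is to establish each capability by explicit construction of the relation triple $\langle \rmR^h, \rmR^t, \rmR^c\rangle$. Since the score is nonnegative, $s(h,r,t)=s^*=0$ is equivalent to the matrix equation
\begin{equation*}
\rmH \rmR^h - \rmR^t \rmT + \rmH \rmR^c \rmT \;=\; 0, \qquad (\star)
\end{equation*}
and the symmetry of $\rmH,\rmT$ makes transposition a powerful tool: transposing $(\star)$ yields $(\rmR^h)^{\!\top}\!\rmH - \rmT(\rmR^t)^{\!\top} + \rmT(\rmR^c)^{\!\top}\!\rmH = 0$, which is essentially $(\star)$ with $\rmH$ and $\rmT$ swapped, up to replacing the relation embedding by its transpose. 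All four patterns will be derived from this observation.

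For the symmetric case I would set $\rmR^h = \rmR^t$ to be a symmetric matrix and $\rmR^c$ antisymmetric (e.g.\ $\rmR^c = 0$). Then the transpose of $(\star)$ coincides, up to sign, with the swapped equation for $(t,r,h)$, so $(h,r,t)\iff (t,r,h)$. Asymmetry is witnessed by producing a concrete embedding with $\rmR^h\neq \rmR^t$ for which $(\star)$ is solvable but its swapped counterpart is not; a two-entity example suffices. For inversion, given $r_1$ with embedding $\langle \rmR_1^h, \rmR_1^t, \rmR_1^c\rangle$, I would define $r_2$ by $\rmR_2^h := -(\rmR_1^t)^{\!\top}$, $\rmR_2^t := -(\rmR_1^h)^{\!\top}$, $\rmR_2^c := (\rmR_1^c)^{\!\top}$. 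The transpose of $(\star)$ for $(h,r_1,t)$ then matches $(\star)$ for $(t,r_2,h)$ term by term, so $(h,r_1,t)\iff(t,r_2,h)$.

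For the 1-N, N-1, and N-N patterns I would rewrite $(\star)$ in the two equivalent forms
\begin{equation*}
(\rmR^t - \rmH\rmR^c)\,\rmT \;=\; \rmH\rmR^h, \qquad \rmH\,(\rmR^h + \rmR^c\rmT) \;=\; \rmR^t\rmT.
\end{equation*}
If $(h,r,t_1)$ and $(h,r,t_2)$ both hold, subtraction gives $(\rmR^t - \rmH\rmR^c)(\rmT_1-\rmT_2)=0$; conversely, by choosing the relation embedding so that $\rmR^t - \rmH\rmR^c$ is singular and its kernel contains a nonzero symmetric matrix, I can add any multiple of such a matrix to a particular symmetric solution $\rmT_0$ and obtain infinitely many valid tails, establishing 1-N. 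The N-1 case is the mirror argument using singularity of $\rmH + $\,\emph{(left operator)}, and N-N is obtained by arranging both singularities simultaneously. The main technical obstacle is the symmetry constraint on $\rmH,\rmT$: the null space of a chosen $d\times d$ operator need not intersect the $d(d{+}1)/2$-dimensional symmetric subspace nontrivially. I would resolve this by constructing rank-$(d{-}1)$ operators of the form $\rmM$ with a null vector $\vv$, which always admits the rank-one symmetric null element $\vv\vv^{\!\top}$, thus guaranteeing at least a one-parameter family of symmetric solutions. Finally, the fact that MQuinE evades Z-paradox is not re-proved here but follows because the bilinear cross term $\rmH\rmR^c\rmT$ prevents $(\star)$ from factoring as $f(\rmH,r)=g(\rmT,r)$, so Proposition~\ref{prop:distanceBasedModels} does not apply.
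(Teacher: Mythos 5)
Your proposal is correct and follows essentially the same route as the paper's appendix proof: symmetry/asymmetry and inversion are handled by transposing the score equation and exploiting the symmetry of the entity matrices (your inverse construction $\rmR_2^h=-(\rmR_1^t)^{T}$, $\rmR_2^t=-(\rmR_1^h)^{T}$, $\rmR_2^c=(\rmR_1^c)^{T}$ differs from the paper's choice only by an overall sign, and your symmetric case is a special instance of the paper's condition $(\rmR^t)^{T}=\mp\rmR^h$, $(\rmR^c)^{T}=\pm\rmR^c$), while 1-N/N-1/N-N follow from the same factored forms $(\rmR^t-\rmH\rmR^c)\rmT=\rmH\rmR^h$ and $\rmH(\rmR^h+\rmR^c\rmT)=\rmR^t\rmT$ together with a singularity/rank argument. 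Your explicit rank-one symmetric kernel element $\vv\vv^{T}$ is a slightly more careful treatment of the symmetry constraint on entity embeddings than the paper's bare rank condition $\operatorname{rank}(\rmR^t)+\operatorname{rank}(\rmR^c)<d$, but it is the same underlying argument.
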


\begin{theorem}[Composition] \label{thm:compose}
    {MQuinE} can model the Abelian/non-Abelian compositions of relations.
\end{theorem}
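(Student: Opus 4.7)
The plan is to exhibit explicit relation embeddings for the composed relation $r_3 = r_1 \oplus r_2$ such that whenever both $(a, r_1, b)$ and $(b, r_2, c)$ are modeled with score zero, the triple $(a, r_3, c)$ is also modeled with score zero. Since the theorem only asserts that MQuinE \emph{can} model composition, it suffices to produce one valid construction.

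First, I would specialize to embeddings in which the cross-term matrix vanishes for the three relations involved, i.e., $\rmR_1^c = \rmR_2^c = \rmR_3^c = \bm{0}$. Under this restriction the zero-score conditions $s(a,r_1,b)=0$ and $s(b,r_2,c)=0$ reduce to the linear matrix equations
\begin{equation*}
\rmA \rmR_1^h = \rmR_1^t \rmB, \qquad \rmB \rmR_2^h = \rmR_2^t \rmC.
\end{equation*}
Right-multiplying the first by $\rmR_2^h$ and substituting the second yields $\rmA \rmR_1^h \rmR_2^h = \rmR_1^t \rmR_2^t \rmC$. I would then define the embedding of $r_3$ by $\rmR_3^h := \rmR_1^h \rmR_2^h$, $\rmR_3^t := \rmR_1^t \rmR_2^t$, $\rmR_3^c := \bm{0}$, and read off $s(a, r_3, c) = \|\rmA \rmR_3^h - \rmR_3^t \rmC\|_F^2 = 0$, which certifies that composition is realized.

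For the Abelian/non-Abelian distinction, I would control the commutativity of the chosen relation matrices. In the Abelian case ($r_1 \oplus r_2 = r_2 \oplus r_1$), I would pick $\rmR_1^h, \rmR_2^h$ and $\rmR_1^t, \rmR_2^t$ from a commutative subalgebra (e.g., simultaneously diagonalizable matrices), so that swapping the order of composition leaves $\rmR_3^h$ and $\rmR_3^t$ unchanged; in the non-Abelian case I would select generic noncommuting matrices so that the order matters. Mirroring the argument for symmetry/asymmetry in \Cref{thm:sym} would also allow me to certify that the symmetric entity matrices $\rmH,\rmT$ provide enough degrees of freedom for the linear equations above to be simultaneously satisfiable throughout the graph.

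The main subtlety I anticipate is that this construction uses $\rmR^c = \bm{0}$, which degenerates MQuinE into the MQuadE-like regime. This is not a contradiction with MQuinE's role as a ``cure'' for Z-paradox, because the expressiveness theorems merely assert that each relation pattern is realizable by \emph{some} admissible embedding, not that a single embedding simultaneously exhibits all desiderata. Producing a construction that keeps $\rmR^c \neq \bm{0}$ would require solving bilinear equations involving the cross term $\rmA \rmR^c \rmB$, which appears unnecessary for the statement as written; I would only pursue that route if the paper demanded a uniform embedding that combines composition with Z-paradox avoidance in the same matrices.
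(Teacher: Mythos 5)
Your proof is correct for the theorem as literally stated, but it takes a genuinely more restrictive route than the paper's. By setting $\rmR_1^c=\rmR_2^c=\rmR_3^c=\bm{0}$ you collapse MQuinE to MQuadE and inherit its composition rule $\rmR_3^h=\rmR_1^h\rmR_2^h$, $\rmR_3^t=\rmR_1^t\rmR_2^t$; the chaining step $\rmA\rmR_1^h\rmR_2^h=\rmR_1^t\rmR_2^t\rmC$ and the commuting/non-commuting choice for the Abelian/non-Abelian dichotomy are both sound. The paper instead keeps all cross terms arbitrary: it rewrites the two zero-score conditions as $\rmE_1\rmR_1^h=(\rmR_1^t-\rmE_1\rmR_1^c)\rmE_2$ and $\rmE_2(\rmR_2^h+\rmR_2^c\rmE_3)=\rmR_2^t\rmE_3$, chains them, and obtains
\begin{equation*}
\rmE_1\rmR_1^h\rmR_2^h-\rmR_1^t\rmR_2^t\rmE_3+\rmE_1\bigl(\rmR_1^h\rmR_2^c+\rmR_1^c\rmR_2^t\bigr)\rmE_3=0,
\end{equation*}
so the composite relation is defined with a generally nonzero cross term $\rmR_3^c=\rmR_1^h\rmR_2^c+\rmR_1^c\rmR_2^t$, and the Abelian case acquires the extra condition $\rmR_1^h\rmR_2^c+\rmR_1^c\rmR_2^t=\rmR_2^h\rmR_1^c+\rmR_2^c\rmR_1^t$ (which is trivially satisfied in your degenerate setting). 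What the paper's derivation buys is exactly the issue you flag as a subtlety: with $\rmR^c=\bm{0}$ the score of every relation involved has the form $\|f(\rmH,\vr)-g(\rmT,\vr)\|$ and falls back under \Cref{prop:distanceBasedModels}, so those relations suffer from Z-paradox; your construction therefore establishes composition only in the regime where MQuinE's distinguishing feature is switched off and cannot support the joint claim of \Cref{table:compare} that composition modeling and Z-paradox avoidance coexist. The paper's closed-form rule shows that no bilinear solving is needed to keep $\rmR^c\neq\bm{0}$ --- the composite's cross term is determined linearly from the factors --- so the stronger and more useful statement is available with essentially the same one-line computation you already performed.
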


The proofs of \Cref{thm:sym} and \Cref{thm:compose} are provided in Appendix.

\begin{theorem}[No Z-paradox] \label{thm:noZparadox}
    {MQuinE} does not suffer from Z-paradox.
\end{theorem}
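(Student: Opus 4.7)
The plan is to negate the Z-paradox condition directly: it suffices to exhibit one choice of entity and relation embeddings for which the three hypotheses of \Cref{def:z-paradox} hold at the infimum $s^* = 0$ while $s(e_1, r, e_4) > 0$. Unlike the situation in \Cref{prop:distanceBasedModels}, where the difference-form of the score forces the fourth equality, the cross term $\rmH \rmR^c \rmT$ in \eqref{score} leaves an algebraic residual that is not pinned down by the other three constraints.

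First, I would expand the three hypothesized zero-score equations using the score in \eqref{score}, obtaining matrix equations of the form
\begin{equation*}
\rmE_i \rmR^h - \rmR^t \rmE_j + \rmE_i \rmR^c \rmE_j = 0, \qquad (i,j) \in \{(1,2),\, (3,2),\, (3,4)\},
\end{equation*}
where $\rmE_k$ denotes the symmetric matrix embedding of entity $e_k$. Subtracting the $(1,2)$ equation from the $(3,2)$ one gives $(\rmE_1 - \rmE_3)(\rmR^h + \rmR^c \rmE_2) = 0$. Using the $(3,4)$ equation to eliminate $\rmR^t \rmE_4$ in the expression for $s(e_1, r, e_4)$, a short manipulation collapses the result to
\begin{equation*}
s(e_1, r, e_4) \;=\; \bigl\| (\rmE_1 - \rmE_3)\, \rmR^c\, (\rmE_4 - \rmE_2) \bigr\|_F^2 .
\end{equation*}
This identity is the crux of the argument: the Z-paradox residual is governed entirely by the cross-term embedding $\rmR^c$, which has no counterpart in the translation-based or bilinear models discussed above.

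Next, I would exhibit an explicit counterexample in dimension $d = 2$: take $\rmE_1 = \rmE_4 = I$, $\rmE_2 = \rmE_3 = 0$, $\rmR^h = \rmR^t = 0$, and $\rmR^c$ any nonzero $2 \times 2$ matrix. The entity matrices are symmetric; all three hypothesized equalities hold trivially because every summand in the corresponding score contains a factor of $0$; yet $s(e_1, r, e_4) = \|\rmR^c\|_F^2 > 0$. Thus the universal implication in \Cref{def:z-paradox} fails for MQuinE, proving that MQuinE does not suffer from Z-paradox.

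The only nontrivial step is the algebraic derivation of the residual identity; everything else is substitution and a toy construction. The expected obstacle is keeping the counterexample compatible with the structural constraints of MQuinE — in particular the symmetry of $\rmH$ and $\rmT$ — while still producing a nonzero residual; degenerate choices such as $\rmE_2 = \rmE_4$ or $\rmE_1 = \rmE_3$ collapse the residual automatically, but these are easy to avoid in the explicit construction above.
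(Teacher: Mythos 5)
Your proposal is correct: the three zero-score hypotheses do give $(\rmE_1-\rmE_3)(\rmR^h+\rmR^c\rmE_2)=0$, eliminating $\rmR^t\rmE_4$ via the $(3,4)$ equation yields the residual $(\rmE_1-\rmE_3)(\rmR^h+\rmR^c\rmE_4)$, and substituting the first identity collapses it to
\begin{equation*}
s(e_1,r,e_4)=\bigl\|(\rmE_1-\rmE_3)\,\rmR^c\,(\rmE_4-\rmE_2)\bigr\|_F^2,
\end{equation*}
so any choice with $(\rmE_1-\rmE_3)\rmR^c(\rmE_4-\rmE_2)\neq 0$ refutes the implication in \Cref{def:z-paradox}; your explicit choice ($\rmE_1=\rmE_4=\mathbf{I}$, $\rmE_2=\rmE_3=0$, $\rmR^h=\rmR^t=0$, $\rmR^c\neq 0$) satisfies the symmetry constraint and gives $s(e_1,r,e_4)=\|\rmR^c\|_F^2>0$ while $\inf s=0$ is clearly attained. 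The paper takes the same basic route---exhibit embeddings for which the Z-pattern holds at score zero but $s(e_1,r,e_4)\neq 0$---but does so by directly plugging in one concrete $2\times 2$ configuration (with nonzero $\rmR^h,\rmR^t$ and distinct entity matrices), and it additionally displays a second choice of $\rmE_4$ for which all four scores vanish, illustrating that both outcomes are realizable. Your derivation of the closed-form residual is a genuine refinement the paper does not state: it isolates the cross term $\rmR^c$ as the precise mechanism that breaks the paradox and characterizes exactly when the fourth link is forced, at the cost of a counterexample that uses a degenerate relation ($\rmR^h=\rmR^t=0$ and zero entity embeddings), which is logically admissible under \Cref{def:z-paradox} but less illustrative of a ``typical'' MQuinE configuration than the paper's example.
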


\begin{proof}
We show the result via two examples. Let
\begin{align*}
\rmR^h = 
    \begin{bmatrix}
	 1 & 0 \\
	0 & 0 \\
	 \end{bmatrix}, 
  \rmR^t = 
    \begin{bmatrix}
	 0 & 0 \\
	0 & -1 \\
	 \end{bmatrix},
  \rmR^c = 
    \begin{bmatrix}
	 -1 & 0 \\
	0 & -1 \\
	 \end{bmatrix},
  \rmE_1 = 
    \begin{bmatrix}
	 1 & 0 \\
	0 & 1 \\
	 \end{bmatrix}, 
  \rmE_2 = 
    \begin{bmatrix}
	 1 & 0 \\
	0 & 0 \\
	 \end{bmatrix},
  \rmE_3 = 
    \begin{bmatrix}
	 0 & 0 \\
	0 & 1 \\
	 \end{bmatrix}.
\end{align*}

On one hand, set $\rmE_4 = 
    \begin{bmatrix}
	 1 & 1 \\
	1 & 0 \\
	 \end{bmatrix}$, it holds that
\begin{align*}
    s(e_1,r,e_2) = 0,\quad
    s(e_3,r,e_2) = 0, \quad
    s(e_3,r,e_4) = 0,\quad
    s(e_1,r,e_4) = 1. 
\end{align*}
In other words, given a Z-pattern, $e_1$ does not link to $e_4$.
On the other hand, set $\rmE_4 = 
    \begin{bmatrix}
	 1 & 0 \\
    0& -1 \\
	 \end{bmatrix}$, it holds that
\begin{align*}
    s(e_1,r,e_2) = 0,\quad
    s(e_3,r,e_2) = 0, \quad
    s(e_3,r,e_4) = 0,\quad
    s(e_1,r,e_4) = 0. 
\end{align*}
In other words, given a Z-pattern, $e_1$ may link to $e_4$.
This completes the proof.
\end{proof}


\begin{remark}
Set $\rmR^c=0$, MQuinE becomes MQuadE~\citep{yu2021mquade}.
So it is not surprising to draw the conclusion that MQuinE can model various relation patterns (\Cref{thm:sym} and \Cref{thm:compose}) since MQuadE can. The cross term $\rmH\rmR^c\rmT$ plays the central role in circumventing Z-paradox. 
Adding such a cross term to MQuadE to obtain MQuinE is nontrivial, the key insight is 
 \Cref{prop:distanceBasedModels} --- without a cross term between the head and tail entities, a distance-based model must suffer from Z-paradox. 
\end{remark}

The comparison of MQuinE and some existing KGE models in terms of expressiveness are given in \Cref{table:compare}. To our knowledge, MQuinE is the only KGE model that does not suffer from Z-paradox while preserving the ability to capture all relation patterns.

\section{Learning of MQuinE}\label{sec:methods-2}


We introduce the learning of our model in this section, including parameterization, regularization, sampling strategy, and objective function.






\paragraph{Parameterization and regularization.}
We constrain the entity embedding matrices to be symmetric
and parameterize each entity matrix $\mathbf{E}$ by a lower triangular matrix $\mathbf{A}$ and its transpose, i.e.,$\mathbf{E} = \mathbf{A} + \mathbf{A}^{T}$. 
We use the Frobenius norm of entity and relation embedding matrices as regularization.

        

\begin{algorithm}[t]
   \caption{Z-sampling}
   \label{alg:Zsampling}
   \KwIn{the set of entities $\mathcal{E}$, the set of relations $\mathcal{R}$, the set of observed triplets $\mathcal{O}$, a triplet $(h, r, t)$, number of negative samples $m \in \mathbb{N}$, number of Z-samples $k \in \mathbb{N}$.}
   Fix $h, r$, sample $m$ tail entities $\{ t_i \}_{i=1}^m$ s.t. $(h,r,t_i) \notin \mathcal{O}~\forall i \in [m]$, set $S_{\mathrm{neg}} = \{ (h, r, t_i) \}_{i=1}^m$\;
   Set $S_Z = \emptyset$\;
   \For{ $i=1,2,\ldots,m$ }{
        Collect all Z-patterns associated with $(h,r,t_i)$, i.e., $e_2, e_3 \in \mathcal{E}$ s.t. $( h, r, e_2 ), (e_3, r, e_2), (e_3, r, t) \in \mathcal{O}$\;
        $S_Z = S_Z \cup \{ ( h, r, e_2 ), (e_3, r, e_2), (e_3, r, t) \}$;
   }
   Uniform randomly select $k$ triplets in $S_Z$ and remove other triplets from $S_Z$\;
   \KwOut{$S_{\mathrm{neg}}$ and $S_Z$.}
\end{algorithm}


\paragraph{Z-sampling.} 
The negative sampling technique plays an important role in the training of KGE models. Classic negative sampling first sample a batch of observed facts. Then for each fact $(h,r,t)$ in the batch, we fix the head entity $h$ and relation $r$ and sample $m$ tail entities $\{ t_i \}_{i=1}^m$ such that $\{(h,r,t_i) \}_{i=1}^m$ are not observed. Lastly, we perform a gradient step to decrease the score of positive samples and increase the score of negative samples.
To mitigate the effect of Z-patterns more explicitly and fully exploit the benefit of MQuinE, we propose a new sampling technique called \emph{Z-sampling}. Given a positive fact $(h,r,t)$, Z-sampling first sample $m$ negative samples $\{ (h,r,t_i) \}_{i=1}^m$ following exactly the same procedure as the classic negative sampling, then it collects all Z-patterns from observed facts that are related to the sampled negative samples, i.e., 
\begin{align*}
        S_{\mathrm{Z}} = \cup_{i=1}^m \{ (h,r,e_2), (e_3, r, e_2), (e_3, r, t_i) | (h,r,e_2), (e_3, r, e_2), (e_3, r, t_i) \in \mathcal{O} \}.
\end{align*}
Lastly, the Z-sampling uniform randomly samples $k$ facts from $S_Z$ and treats them as positive facts. The detailed algorithm of Z-sampling is given in \Cref{alg:Zsampling}. Z-sampling explicitly tries to minimize the score of positive facts in the Z-pattern and maximize the score of negative facts simultaneously. Experiments on KG benchmark datasets demonstrate the effectiveness of Z-sampling; see \Cref{exp:ablation} for details. Note that~\Cref{alg:Zsampling} can also be applied to ranking the head entity $h$ (fixing $r, t$) with minor modifications, we omit the details.


\paragraph{Objective function.} 
The loss function w.r.t an observed fact $(h,r,t)$ is
\begin{align*}
    \begin{split}
            \mathcal{L}_{ (h,r,t) } = &  -\log \sigma(\gamma-s(h,r,t))  - \lambda_{\mathrm{neg}} \mathbb{E}_{ (h,r,t') \sim S_{\mathrm{neg}} } \left[ \log\sigma(s(h,r,t^{\prime})-\gamma) \right] \\
            & - \lambda_{Z} \mathbb{E}_{ (h,r,t'') \in S_{Z} } \left[ \log \sigma(\gamma-s(h,r,t'')) \right],
    \end{split}
\end{align*}
where $\gamma > 0$ is a pre-defined margin, $\sigma$ is the sigmoid function, i.e., $\sigma(x) = 1/(1+e^{-x})$, $S_{\mathrm{neg}}$ and $S_Z$ are the negative samples and Z-samples as defined in \Cref{alg:Zsampling}, and $\lambda_{\mathrm{neg}}, \lambda_{Z}$ are positive real numbers that control the trade-off between positive and negative samples.








\section{Experiments}\label{sec:exp}

We conduct experiments to demonstrate the impact of Z-paradox for existing KGE models on KG benchmark datasets and evaluate the performance of MQuinE. First, we introduce the experimental setup in \Cref{exp: setup}, including the description of benchmark datasets, evaluation tasks, evaluation metrics and baseline methods.
In \Cref{exp:Zpatterns}, we propose a metric called \emph{Z-value} to quantify the number of Z-patterns related to a given fact and gather statistics of Z-values to quantify the effect of Z-patterns for our experimental datasets. For each dataset, we divide their test samples into easy, neutral, and hard cases according to the Z-value of test facts. In \Cref{exp:CaseStudy}, we evaluate the performance of MQuinE against other competitive baseline methods on the easy, neutral, and hard cases respectively; we also report the overall improvement of MQuinE in \Cref{exp: link prediction}. Lastly, we evaluate the effectiveness of Z-sampling with different KGE models in \Cref{exp:ablation}.


\subsection{Experimental setup}\label{exp: setup}

{\bf Dataset.} We conduct experiments on five large-scale benchmark datasets --- FB15k-237 \citep{toutanova2015observed}, WN18 \citep{bordes2013translating}, WN18RR \citep{dettmers2018convolutional}, YAGO3-10 \citep{mahdisoltani2014yago3}, and CoDEx \citep{safavi2020codex} (CoDEx-L, CoDEx-M, CoDEx-S).
The detailed statistics of these datasets are given in \Cref{appendix: dataset details} (\Cref{table:datasets}).
These datasets contain various relations including 1-N, N-1, N-N, and composition relations, and are suitable for evaluating complex KGE models. 
We follow the train/validation/test split from \cite{sun2019rotate} and divide the observed facts into training, validation, and testing sets by an 8:1:1 ratio. 







{\bf Evaluation task and metrics.} We evaluate the performance of KGE models on the link prediction task. Given a query fact $(h,r,t)$, the link prediction task requires one to fix $h,r$ and rank $t$ among all possible tail entities $t' \in \mathcal{E}$ except those $t'$'s such that $(h,r,t')$'s appear in the training set. We use the mean reciprocal rank (MRR), mean rank (MR), Hits@N (N = 1,3,10) as our evaluation metrics.


{\bf Baselines.} We compare MQuinE with KGE baselines including TransE \citep{bordes2013translating}, RotatE \citep{sun2019rotate}, DisMult \citep{yang2014embedding}, ComplEX \citep{trouillon2016complex}, DihEdral \citep{xu2019relation}, QuatE \citep{zhang2019quaternion}, TuckER \citep{balavzevic2019tucker}, ConvE \citep{dettmers2018convolutional}, OTE \citep{tang2019orthogonal}, BoxE \citep{abboud2020boxe}, HAKE \citep{zhang2020learning}, MQuadE \citep{yu2021mquade}, ExpressivE \citep{pavlovic2023expressive}, DualE \citep{cao2021dual} and HousE \citep{li2022house}.



\subsection{Statistics of Z-patterns}\label{exp:Zpatterns}

\begin{figure}[!t]
\begin{minipage}[t]{\textwidth}
 \begin{minipage}[t]{0.4\textwidth}
 \small
  \centering
       \begin{tabular}[t]{l p{3cm}} 
    \toprule
        Case name & Condition  
     \\ \midrule
        Easy case & $\mathrm{rank}_Z( h, r ,t ) < 10$.
        \\ 
        Neutral case & $n_Z( h,r,t )$ is tied for the $10^{\rm th}$ place.
        \\
        Hard case & otherwise. 
        \\ 
     \bottomrule
     \end{tabular}
\makeatletter\def\@captype{table}\makeatother\caption{ \small{Case splitting description.} }\label{table:Case}
  \end{minipage}
  \quad
  \begin{minipage}[t]{0.5\textwidth}
  \small
   \centering
         \begin{tabular}[t]{lccc}        
           \toprule
    Dataset & Easy case & Neutral case & Hard case  \\
    \midrule
    FB15k-237 & 6,681 (33\%) & 6,546 (32\%) & 7,239 (35\%) \\
    WN18 & 356 (7\%) & 4,331 (87\%) & 313 (6\%) \\
    WN18RR & 314 (10\%) & 2,679 (86\%) & 123 (4\%) \\
    YAGO3-10 & 1,248 (25\%) & 1,074 (21\%) & 2,678 (54\%)\\
        \bottomrule
      \end{tabular}
\makeatletter\def\@captype{table}\makeatother\caption{ \small{Statistics of Z-patterns in the testing set responding to the training set.} }\label{tab: Statistics}
   \end{minipage}
\end{minipage}
\vspace{-10pt}
\end{figure}

We define two statistics that characterize the Z-patterns of a fact. The first one is Z-value, which can be used to measure the number of Z-patterns associated with a fact $(h,r,t)$. The second one is the rank of a fact based on its Z-value.

\begin{definition}[Z-value]
    Given a fact $(h,r,t)$, we define
    \begin{align*}
         n_Z( h, r, t ) := \big| \{  (e_2, e_3) ~|~ e_2 \neq e_3;(h,r,e_2), (e_3, r, e_2), (e_3, r, t) \in \mathcal{O} \} \big|,
    \end{align*}
    which is the number of Z-patterns connected with $(h,r,t)$\footnote{Z-value here differs from the concept of Z-score in statistics.}.
\end{definition}

\begin{definition}
Define
    \[
        \mathrm{rank}_Z( h,r, t ) = \sum_{\substack{ t' \in \mathcal{E},~ (h,r,t') \notin \mathcal{O} }} \mathbf{1}_{\{ n_Z( h, r, t' ) \geq n_Z( h, r, t ) \}},
    \]
    which is the rank of $n_Z( h, r, t )$ among other unobserved candidate facts. 
\end{definition}

Intuitively, for two facts $(h,r,t)$ and $(h,r,t')$, if $n_Z( h, r, t  ) \gg n_Z( h, r, t' )$, then KGE models that suffer from Z-paradox would incline to assign $(h,r,t)$ a lower score\footnote{A lower score means a more plausible fact.} compared with $(h,r,t')$ and rank $(h, r, t)$ before $( h, r, t' )$. Therefore, for a test fact $(h,r,t)$, if there are many facts $(h, r, t')$ such that $(h, r, t')$'s do not appear in the training set and $n_Z(h, r, t') \gg n_Z(h,r,t)$, then for KGE models that suffer from Z-paradox, this test fact should be hard for them to predict. Motivated by the above logic, we can use $\mathrm{rank}_Z(h,r,t)$ to measure the level of difficulty for predicting $(h,r,t)$.
In \Cref{table:Case}, we divide the test facts into three categories, namely, easy, neutral and hard cases.
For easy cases, we require $n_Z(h,r,t)$ to be top-9;
for neural cases, we require
$n_Z(h,r,t)$ is tied for the 10th place; other cases are categorzied into hard cases.



In \Cref{tab: Statistics}, we summarize the ratio of easy, neutral and hard cases in our experimental datasets. We can observe that both FB15k-237 and YAGO3-10 have a notable number of hard cases while most test facts of WN18 and WN18RR are neutral cases.
 



\begin{table*}[t]
\small
    \centering
    \begin{tabular}{|c|c|c|c|c|c|c|c|c|c|}
         \hline 
         \multirow{2}{*}{{Models}}& \multicolumn{3}{c|}{\text { FB15k-237 }} & \multicolumn{3}{c|}{\text { WN18RR }} & \multicolumn{3}{c|}{\text { YAGO3-10 }} \\
\cline { 2 - 10 } & \text { Easy } &  \text { Neutral } & \text { Hard } & \text { Easy } &  \text { Neutral } & \text { Hard } & \text { Easy } &  \text { Neutral } & \text { Hard }  \\
\hline 
\text {ComplEX} &  67.3\% &   47.3\% &   48.9\% &  96.3\% &   50.5\% &   49.6\%
& 0.678\% & 0.455\% & 0.518\% \\ \hline
\text { DisMult } &  63.2\% &   45.2\% &   37.2\% &  96.1\% &   48.6\% &   48.4\% 
& 0.627\% & 0.356\% & 0.526\%\\ \hline
\text { TransE }  &  64.0\% &   48.0\% &   41.9\% &  90.9\% &   48.2\% &   53.7\% 
& 0.665\% & 0.404\% & 0.665\% \\ \hline
\text { RotatE }  &  67.7\% &   48.1\% &   39.7\% &  84.7\% &   53.4\% &   49.2\% 
& 0.751\% & 0.457\% & 0.691\% \\ \hline
\text { MQuinE }  &  69.7\% &   52.9\% &   \textbf{52.7\%} & 85.2\% &   56.8\% &   \textbf{62.2\%}
& 0.783\% & 0.543\% & \textbf{0.759\%}\\ 
\hline
    \end{tabular}
    \vskip 0.10in
    \caption{ {Hits@10 on easy, neutral, and hard cases of FB15k-237 and WN18RR.} }
    \label{tab:CaseStudy}
    \vspace{-10pt}
\end{table*}



\subsection{Case study of MQuinE}\label{exp:CaseStudy}

We evaluate the performance of MQuinE and other baseline methods on the easy/neutral/hard cases, respectively. The results on FB15k-237 and WNRR18 are shown in  \Cref{tab:CaseStudy}, respectively. We can observe that the Hits@10 on hard cases is significantly lower than the Hits@10 on easy cases when using CompLEX, DisMult, TransE and RotatE; the accuracy on hard cases is about 20\% lower than the accuracy on easy cases. This observation indicates that Z-paradox is indeed a serious issue and can significantly degrade the performance of existing KGE models. The results in \Cref{tab:CaseStudy} also show that MQuinE can significantly improve the performance on hard cases; the Hits@10 obtained by MQuinE is 13.0\% higher than RotatE on FB15k-237. In the meanwhile, MQuinE does not sacrifice accuracy on easy and neutral cases; the Hits@10 of MQuinE is 2.0\% and 4.8\% higher than RotatE on easy and neutral cases, respectively. A similar conclusion can be drawn from \Cref{tab:CaseStudy}, the Hits@10 of MQuinE is about 13.0\% higher than RotatE and 10.6\% higher than TransE on WN18RR.

\subsection{Overall evaluation on link prediction}\label{exp: link prediction}




The overall evaluation results of MQuinE and other KGE baseline methods on FB15k-237, WN18RR, YAGO3-10, and CoDEx are presented in \Cref{table:Overall} and some missing results are provided in \Cref{appendix: missing table}. The metric values of baseline methods are taken directly from their original papers. 
Overall, we observe that MQuinE outperforms all the existing KGE methods with a visible margin in all metrics on FB15k-237, where Hits@1 is 7\% higher than other methods and Hits@10 reaches 58.8\%. 
Similarly, MQuinE also exceeds other baseline methods in most metrics on WN18RR, YAGO3-10, WN18 and CoDEx; see \Cref{table:WN18rr}, \Cref{table:YAGO} and \Cref{table:WN} in \Cref{appendix: missing table}. Also, we evaluate MQuinE for node classification task on CoDEx-S and CoDEx-M, and our results shown in \Cref{table:tc} outperforms the baselines mentioned in \citet{safavi2020codex}.


\subsection{Evaluation of Z-sampling}\label{exp:ablation}

We examine the impact of Z-sampling to our model as while as other baseline methods,
the results are given in \Cref{table: ablation}. To fairly evaluation the effect of Z-sampling, we rerun DisMult, ComplEX, TransE, RotatE and MQuadE with/without Z-sampling based on their original implementation.
We can observe that the Z-sampling strategy can significantly improve the performance of DisMult, ComplEX,
but degrades the performance of RotatE and MQuadE a little bit. This demonstrates that Z-sampling is a useful technique for KGE models that do not suffer from Z-paradox.
Not surprisingly, the Z-sampling strategy improves the performance of MQuinE significantly; Z-sampling improves both Hits@1, Hits@3 and Hits@10 for more than 5\%.
Moreover, during our numerical experiments, we also observe that Z-sampling can stabilize the training of MQuinE and makes MQuinE more robust to the different hyperparameter setups. Overall, the superior empirical performance of MQuinE is indispensable to the Z-sampling technique.

\begin{table*}[t]
\small
    \centering
         \begin{tabular}{lccccccccc} 
        \toprule
        \multirow{2}{*}{\textbf{Models}} 
        & \multicolumn{3}{c}{\textbf{FB15k-237}}
        & \multicolumn{3}{c}{\textbf{WN18RR}}
        & \multicolumn{3}{c}{\textbf{YAGO3-10}}
        \\
        \cmidrule{2-4}
        \cmidrule{5-7}
        \cmidrule{8-10}
        & MRR & Hits@1 & Hits@10
        & MRR & Hits@1 & Hits@10
        & MRR & Hits@1 & Hits@10\\
        \midrule
        DisMult &0.241 &0.155  &0.419 
        &0.443 & 0.403 & 0.534 
        &0.340 &0.240 &0.540
        \\
        
        ComplEX &0.247&0.158 &0.428 
        & 0.472 & 0.432 & 0.550 
        &0.360 &0.260 &0.550 
        \\
        
        DihEdral &0.320&0.230  &0.502 
        &0.486 &0.443 &0.557 
        &0.472 & 0.381 &0.643
        \\
        

        TuckER &0.353&0.260  &0.536 
        &0.470 &0.443 &0.526 
        &0.527 &0.446 &0.676 
        \\
        
        ConvE &0.325&0.237  &0.501 
        &0.430 & 0.400   &0.520 
        &0.520 &0.450 &0.660 
        \\
        \midrule
        TransE &0.294 &-  &0.465 
        &0.466 & 0.422 & 0.555 
        & 0.467 & 0.364 & 0.610 
        \\
        
        RotatE &0.336&0.241  &0.530 
        &0.476 & 0.428 & 0.571 
        &0.495&0.402 &0.670
        \\
        
        
        ExpressivE & 0.333 &0.243 &0.512 & 0.482 & 0.407 & \textbf{0.619} & -& -& -\\

        BoxE &0.337 &0.238  &0.538 
        & 0.451 & 0.400  & 0.541 
        & \textbf{0.567} & \textbf{0.494} &0.699
        \\
        HAKE &0.346 &0.250 &0.542
        &\textbf{0.497} &0.452 &0.582
        &0.545 &0.462 &0.694 \\

        MQuadE &0.356 &0.260 &0.549 
        &0.426 &0.427 &0.564 
        &{0.536}&0.449   &0.689
        \\

        DualE & 0.330 & 0.237 & 0.518 & 0.482 & 0.440  & 0.561 & -& -& - \\

        HousE & 0.361 & 0.266 & 0.551 
        & {0.496} & 0.452 & 0.585 
        & {0.565} & 0.487 & 0.703
        \\
        \midrule
        \textbf{MQuinE} & \textbf{0.420} & \textbf{0.332} & \textbf{0.588} &{0.492}&\textbf{0.454}&{0.603} 
        &{0.566}&{0.492}&\textbf{0.711}
        \\
        \bottomrule
        \\
        \toprule
        \multirow{2}{*}{\textbf{Models}} 
        & \multicolumn{3}{c}{\textbf{CoDEx-S}}
        & \multicolumn{3}{c}{\textbf{CoDEx-M}}
        & \multicolumn{3}{c}{\textbf{CoDEx-L}}
        \\
        \cmidrule{2-4}
        \cmidrule{5-7}
        \cmidrule{8-10}
        & MRR & Hits@1 & Hits@10
        & MRR & Hits@1 & Hits@10
        & MRR & Hits@1 & Hits@10
        \\
        \midrule
        RESCAL &0.404 &0.293 &0.623 
        &0.317 &0.244 &0.456
        &0.304 &0.242 &0.419\\
        TransE &0.354 &0.219 &0.634
        &0.303 &0.223 &0.454
        &0.187 &0.116 &0.317\\
        ComplEx &\textbf{0.465} &0.372 &0.646
        &\textbf{0.337} &0.262 &\textbf{0.476}
        &0.294 &0.237 &0.400\\
        ConvE &0.444 &0.343 &0.635
        &0.318 &0.239 &0.464
        &0.303 &0.240 &0.420\\
        TuckER &0.444 &0.339 &0.638
        &0.328 &0.259 &0.458
        &0.309 &0.244 &0.430\\
        \midrule
        \textbf{MQuinE} & {0.443} & \textbf{0.379} & \textbf{0.652}
        & 0.335 & \textbf{0.320} & \textbf{0.476}
        & \textbf{0.326} & \textbf{0.267} & \textbf{0.440}
        \\
        \bottomrule
         \end{tabular}
         \vskip 0.10in
    \caption{ \small{Overall evaluation results on the FB15k-237, WN18RR, and YAGO3-10, and CoDEx datasets.} } \label{table:Overall}
    \vspace{-10pt}
    \end{table*}

\begin{table*}[t]
\small
    \centering
         \begin{tabular}{lccc|ccc} 
        \toprule
        & \multicolumn{3}{c}{\textbf{Without Z-sampling}}
        & \multicolumn{3}{c}{\textbf{With Z-sampling}}\\
        \midrule
        \multirow{2}{*}{\textbf{Models}}
        & \multicolumn{3}{c}{\textbf{Hits@N}} & \multicolumn{3}{c}{\textbf{Hits@N}} \\
          & \textbf{1} & \textbf{3} & \textbf{10} & \textbf{1} & \textbf{3} & \textbf{10} \\
        \midrule
        DisMult &0.155&0.263&0.419    &0.212 ($\uparrow 5.7\%$)& 0.326 ($\uparrow6.3\%$) &0.422 ($\uparrow 0.3\%$) \\
        ComplEX &0.158&0.275&0.428 & 0.231 ($\uparrow7.4\%$) & 0.350 ($\uparrow 7.5\%$) & 0.454 ($\uparrow 2.6\%$) \\
        \midrule
        TransE & - & - &0.465 &0.234  &0.370 &0.484 ($\uparrow 1.9\%$) \\
        RotatE &0.234 &0.366 &0.524
        &0.230 ($\downarrow 0.4\%$)  &0.356 ($\downarrow 1.0\%$)  &0.508 ($\downarrow 1.6 \%$) \\
        MQuadE &0.248 &0.377 &0.529
        &0.269 ($\uparrow 2.1\%$)& 0.342 ($\downarrow 3.5\%$) &0.504 ($\downarrow 2.5\%$) \\
        \midrule
        \textbf{MQuinE} & 0.274 & 0.375 & 0.532 
        & \textbf{0.332 ($\uparrow$ 5.8\%)} 
        & \textbf{0.440 ($\uparrow$ 6.5\%)} 
        & \textbf{0.588 ($\uparrow$ 5.6\%)} \\
        \bottomrule
         \end{tabular}
         \vskip 0.10in
    \caption{ \small{Effect of Z-sampling on the FB15k-237 dataset.}} \label{table: ablation}
    \vspace{-10pt}
    \end{table*}

\section{Conclusion}\label{sec:con}

In this paper, we introduce a phenomenon called Z-paradox and show that many existing KGE models suffer from it both theoretically and empirically.
To overcome the Z-paradox, we propose a new KGE model called \emph{MQuinE} that can circumvent Z-paradox while maintaining strong expressiveness.
Experiments on real-world knowledge bases suggests that the Z-paradox indeed degrades the performance of existing KGE models and strongly support the effectiveness of MQuinE; MQuinE outperforms most existing KGE models by a large margin on standard knowledge completion benchmark datasets.

\bibliography{neurips}
\bibliographystyle{neurips}

\appendix
\section*{Appendix}


The appendix mainly consists five parts.
\begin{enumerate}
    \item Missing proofs to the theorems about model ability in \Cref{subsec:method}.
    \item Descriptions of benchmark datasets used in our experiments.
    \item Missing summary tables for different KGE methods.
    \item Missing results of link prediction.
    \item Implementation details.
\end{enumerate}

\section{Proofs for \Cref{thm:sym} and \Cref{thm:compose}}


For better illustration, we restate
\Cref{thm:sym} as \Cref{thm:symmetry}, \Cref{thm:inverse}, \Cref{thm:N-N}.

\begin{theorem} \label{thm:symmetry}
    \textbf{MQuinE} can model the symmetry/asymmetry relations.
\end{theorem}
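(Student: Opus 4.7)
The plan is to exploit the reduction noted in the remark just above this theorem: setting $\rmR^c = 0$ collapses MQuinE's score function $s(h,r,t) = \| \rmH\rmR^h - \rmR^t\rmT + \rmH\rmR^c\rmT \|_F^2$ to that of MQuadE, so any construction that realizes a given relation pattern in MQuadE transfers verbatim. Since symmetry/asymmetry are known to be expressible by MQuadE, the conclusion is inherited. To keep the proof self-contained, however, I would supply explicit constructions.

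For the symmetric direction, I would choose $\rmR^h = \rmR^t = c\,\mathbf{I}$ for a nonzero scalar $c$ and $\rmR^c = 0$. Then
\begin{equation*}
s(h,r,t) \;=\; \| c\rmH - c\rmT \|_F^2 \;=\; c^2\,\|\rmH - \rmT\|_F^2,
\end{equation*}
which is manifestly symmetric under $\rmH \leftrightarrow \rmT$; in particular $s(h,r,t) = 0$ iff $s(t,r,h) = 0$, so $(h,r,t) \in \mathcal{O} \Leftrightarrow (t,r,h) \in \mathcal{O}$ is realizable.

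For asymmetry, I would exhibit embeddings where $(h,r,t)$ holds but $(t,r,h)$ fails. Fix $\rmR^h = \mathbf{I}$, $\rmR^c = 0$, and take $\rmR^t = \rmA$ where $\rmA$ is symmetric, invertible, and $\rmA^2 \neq \mathbf{I}$ (so $\rmA\rmT$ remains symmetric for any symmetric $\rmT$). Then $s(h,r,t) = \|\rmH - \rmA\rmT\|_F^2$ and $s(t,r,h) = \|\rmT - \rmA\rmH\|_F^2$. Setting $\rmH = \rmA\rmT$ makes $s(h,r,t) = 0$, while $s(t,r,h) = \|\rmT - \rmA^2\rmT\|_F^2$, which can be made strictly positive by choosing $\rmT$ not in the eigenspace of $\rmA^2$ for eigenvalue $1$. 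Hence $(h,r,t)$ and $(t,r,h)$ do not both hold.

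The only real subtlety—and the point I would be most careful about—is respecting the parameterization constraint that entity matrices are symmetric. Products like $\rmA\rmT$ of two symmetric matrices are symmetric only when $\rmA$ and $\rmT$ commute, so the asymmetric construction must either choose $\rmA$ and $\rmT$ that commute (e.g., both diagonal in a common basis, with $\rmA$ not an involution) or, equivalently, work in the eigenbasis of $\rmA$ and let $\rmT$ be diagonal there. Once this compatibility is set up, the verification is a direct Frobenius-norm calculation, and no harder obstacle arises.
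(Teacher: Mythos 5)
Your asymmetry half is fine (and your care about keeping $\rmH=\rmA\rmT$ symmetric via commuting diagonal choices is exactly the right precaution), but the symmetric half has a genuine gap. Your explicit witness $\rmR^h=\rmR^t=c\,\mathbf{I}$, $\rmR^c=0$ gives $s(h,r,t)=c^2\|\rmH-\rmT\|_F^2$, which vanishes only when $\rmH=\rmT$. This is the degenerate situation in which a symmetric relation can hold only between entities with \emph{identical} embeddings --- precisely the reason TransE (where modeling symmetry forces $\vr=0$ and hence $\vh=\vt$) is marked as unable to model symmetry in the paper's own comparison table. By that standard your construction does not establish that MQuinE can model symmetric relations; it only shows the score function can be made a symmetric function of $(\rmH,\rmT)$, which is a weaker statement. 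Your fallback appeal ``set $\rmR^c=0$ and inherit the MQuadE result'' is sound in spirit (the paper makes the same remark), but it leans on MQuadE's expressiveness theorem as a black box, and your attempt to make the argument self-contained replaced it with the degenerate witness.

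The paper's proof avoids this by keeping the relation matrices general and exploiting the symmetry of entity embeddings under transposition: transposing $\rmT\rmR^h-\rmR^t\rmH+\rmT\rmR^c\rmH=0$ and using $\rmH^{T}=\rmH$, $\rmT^{T}=\rmT$ shows that $s(t,r,h)=0$ is equivalent to $-\rmH(\rmR^t)^{T}+(\rmR^h)^{T}\rmT+\rmH(\rmR^c)^{T}\rmT=0$, so whenever $(\rmR^t)^{T}=\mp\rmR^h$ and $(\rmR^c)^{T}=\pm\rmR^c$ the two score equations coincide identically, making the relation symmetric for \emph{all} entity pairs while still admitting facts between distinct entities (e.g.\ $\rmR^h=\mathbf{I}$, $\rmR^t=-\mathbf{I}$, $\rmR^c=0$ yields $s=\|\rmH+\rmT\|_F^2$, satisfied by $\rmT=-\rmH\neq\rmH$). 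To repair your proof, replace the $c\,\mathbf{I}$ construction with a choice satisfying these transpose conditions (or any other choice whose zero set contains pairs with $\rmH\neq\rmT$), and note also that the paper's condition accommodates a nonzero $\rmR^c$, which your reduction to $\rmR^c=0$ does not.
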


\begin{proof}
 A relation $r$ is symmetric iff $(h,r,t) \Leftrightarrow (t,r,h)$. In MQuinE, it requires
    \begin{align*}
        \rmH\rmR^h-\rmR^t\rmT+\rmH\rmR^c\rmT=0 \Leftrightarrow 
        \rmT\rmR^h-\rmR^t\rmH+\rmT\rmR^c\rmH=0.
    \end{align*}
Note that 
    \begin{align*}
        &\rmT\rmR^h-\rmR^t\rmH+\rmT\rmR^c\rmH=0 \\ \Leftrightarrow~&
        -\rmH^{T}(\rmR^t)^{T}+(\rmR^h)^{T} \rmT^{T}+\rmH^{T}(\rmR^c)^{T}\rmT^{T}=0 \\
\stackrel{(a)}{\Leftrightarrow}~&
        -\rmH (\rmR^t)^{T}+(\rmR^h)^{T}  \rmT  +\rmH (\rmR^c)^{T}\rmT =0,
    \end{align*}

where (a) uses the fact that the entity matrix in MQuinE is symmetric.
Hence, if 
    \begin{align*}
         (\rmR^t)^{T} = \mp\rmR^h,\quad
        (\rmR^c)^{T}=\pm\rmR^c,
    \end{align*}
    the relation is symmetric, otherwise, asymmetric.
\end{proof}

\begin{theorem} \label{thm:inverse}
    \textbf{MQuinE} can model the inverse relations.
\end{theorem}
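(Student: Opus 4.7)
The plan is to mirror the symmetry proof closely: write down what the two fact triples $(h, r_1, t)$ and $(t, r_2, h)$ require in terms of the MQuinE score function, and then use the transpose trick together with the symmetry of the entity matrices $\rmH, \rmT$ to rewrite one equation in a form directly comparable to the other. Reading off the resulting conditions on the relation embeddings will give an explicit construction of $r_2$ from $r_1$ (and vice versa), which establishes that MQuinE can realize inverse relations.

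More concretely, I would first observe that $(h, r_1, t)$ holds iff $\rmH\rmR_1^h - \rmR_1^t \rmT + \rmH \rmR_1^c \rmT = 0$, and $(t, r_2, h)$ holds iff $\rmT \rmR_2^h - \rmR_2^t \rmH + \rmT \rmR_2^c \rmH = 0$. Taking the transpose of the second equation and using $\rmH^T = \rmH$, $\rmT^T = \rmT$ yields
\begin{equation*}
    \rmH (\rmR_2^t)^T - (\rmR_2^h)^T \rmT - \rmH (-(\rmR_2^c)^T) \rmT = 0,
\end{equation*}
which matches the first equation termwise under the identification $\rmR_1^h = (\rmR_2^t)^T$, $\rmR_1^t = (\rmR_2^h)^T$, and $\rmR_1^c = -(\rmR_2^c)^T$. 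Equivalently, given any relation $r_1 = \langle \rmR_1^h, \rmR_1^t, \rmR_1^c \rangle$, setting $\rmR_2^h = (\rmR_1^t)^T$, $\rmR_2^t = (\rmR_1^h)^T$, and $\rmR_2^c = -(\rmR_1^c)^T$ makes $r_2$ an inverse of $r_1$.

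I do not anticipate a real obstacle: the algebra is the same transpose-plus-symmetry manipulation that was already used in the proof of \Cref{thm:symmetry}, and the cross term $\rmH \rmR^c \rmT$ behaves well because $(\rmH \rmR^c \rmT)^T = \rmT (\rmR^c)^T \rmH$ thanks to the symmetry of the entity matrices. The only point that needs a little care is the sign on the cross-term condition $\rmR_1^c = -(\rmR_2^c)^T$, which arises because after transposing one gets $\rmH (\rmR_2^t)^T$ with a positive sign and $(\rmR_2^h)^T \rmT$ with a negative sign, so a multiplication by $-1$ (or equivalently, reading off signs carefully) is needed to line up the two equations term by term. Once those three matrix identifications are displayed, the equivalence $(h, r_1, t) \Leftrightarrow (t, r_2, h)$ is immediate, completing the proof.
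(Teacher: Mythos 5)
Your proof is correct and follows essentially the same route as the paper: transpose the equation for $(t,r_2,h)$, use the symmetry of the entity matrices, and read off the identification $\rmR_1^h = (\rmR_2^t)^T$, $\rmR_1^t = (\rmR_2^h)^T$, $\rmR_1^c = -(\rmR_2^c)^T$, which is exactly the paper's construction. The only blemish is a sign slip in your displayed intermediate equation (the cross term after transposing and negating should read $+\,\rmH\bigl(-(\rmR_2^c)^T\bigr)\rmT$, i.e.\ $-\,\rmH(\rmR_2^c)^T\rmT$), but your final identification and the stated construction of $r_2$ from $r_1$ are the correct ones.
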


\begin{proof}
    A relation $r_1$ is the inverse relation of  $r_2$ iif  $(h, r_1, t) \Leftrightarrow (t, r_2, h)$. In MQuinE, it requires
    \begin{align*}
        &\rmH\mathbf{R}^h_1 - \rmR^t_1\rmT + \rmH \rmR^c_1 \rmT = 0\Leftrightarrow
        \rmT\rmR^h_2 - \rmR^t_2\rmH + \rmT \rmR^c_2 \rmH = 0. 
    \end{align*}
Using the fact that the entity matrix is symmetric,
we have
\begin{equation*}
\rmT\rmR^h_2 - \rmR^t_2\rmH + \rmT \rmR^c_2 \rmH = 0
\Leftrightarrow
\rmH (\rmR^t_2)^{T} - (\rmR^h_2)^{T}\rmT -  \rmT (\rmR^c_2)^{T} \rmH = 0.
\end{equation*}
Therefore,    
simply set
    \begin{align*}
        \rmR^h_1 = (\rmR^t_2)^{T},\quad
        \rmR^t_1= (\rmR^h_2)^{T}, \quad
        \rmR^c_1 = -(\rmR^c_2)^{T},
    \end{align*}
    the conclusion follows.
\end{proof}

\begin{theorem} \label{thm:N-N}
    \textbf{MQuinE} can model the 1-N/N-1/N-N relations.
\end{theorem}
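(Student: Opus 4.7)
The plan is to exploit the fact that setting $\rmR^c = 0$ in MQuinE collapses the score function to that of MQuadE \citep{yu2021mquade}; since MQuadE has already been shown to model 1-N/N-1/N-N relations, and enlarging the parameter space cannot reduce expressiveness, MQuinE inherits this ability immediately. I would record this reduction as the short proof, and supplement it with a self-contained construction for concreteness.

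For the direct argument, the first step is to rewrite the condition $s(h,r,t) = 0$ as the linear matrix equation $(\rmH \rmR^c - \rmR^t)\rmT = -\rmH \rmR^h$ in $\rmT$ when the head and relation are fixed, and symmetrically as a linear equation in $\rmH$ when the tail and relation are fixed. Then 1-N (respectively N-1) reduces to choosing a relation and a head (respectively tail) such that the associated linear map, restricted to the subspace of symmetric matrices, has a nontrivial kernel while the right-hand side lies in its image; N-N follows by arranging both maps to be simultaneously non-injective with consistent right-hand sides.

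To exhibit an explicit witness, I would work in dimension $d = 2$ with rank-deficient embeddings. For 1-N, take $\rmH = \mathrm{diag}(1, 0)$, $\rmR^h = \rmI$, $\rmR^c = \mathbf{0}$, and $\rmR^t = 2\rmH$. The equation reduces to $\rmH(\rmI - 2\rmT) = \mathbf{0}$, whose solutions in the space of $2 \times 2$ symmetric matrices form the one-parameter family $\rmT_c = \mathrm{diag}(1/2, c)$ with $c \in \R$. Any two distinct values of $c$ yield two distinct tail entities $t_1, t_2$ with $s(h, r, t_1) = s(h, r, t_2) = 0$, witnessing 1-N. An analogous construction with the roles of head and tail swapped (which is legal because $\rmH = \rmH^T$ and $\rmT = \rmT^T$) yields N-1, and tuning the relation matrices so that both coefficient operators are simultaneously rank-deficient yields N-N.

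The main obstacle will be the symmetry constraint on entity matrices: a generic rank-nullity argument on arbitrary matrices would produce solutions that need not be symmetric, so the free parameter must sit in a position that automatically preserves symmetry. The construction above sidesteps this by placing the degree of freedom on a diagonal entry, but in higher dimensions or with more intricate relation matrices one must verify that the kernel intersects the symmetric subspace nontrivially. A secondary concern is compatibility: one might wish to confirm that a single MQuinE instance can realize N-N together with the symmetry, inverse, and composition patterns of \Cref{thm:symmetry}, \Cref{thm:inverse}, and \Cref{thm:compose}; since those constructions constrain disjoint relation matrices, no conflict arises.
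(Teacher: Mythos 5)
Your proposal is correct, and at its core it uses the same linear-algebraic mechanism as the paper: fixing $h,r$, the zero-score condition is linear in $\rmT$, and 1-N holds exactly when the coefficient map $\rmT \mapsto (\rmR^t - \rmH\rmR^c)\rmT$ fails to be injective (on symmetric matrices) while the system stays consistent. The paper phrases this as a general sufficient condition — subtract the two equations to get $(-\rmR^t + \rmH\rmR^c)(\rmT_1-\rmT_2)=0$ and impose $\operatorname{rank}(\rmR^t)+\operatorname{rank}(\rmR^c)<d$ (and symmetrically $\operatorname{rank}(\rmR^h)+\operatorname{rank}(\rmR^c)<d$ for N-1, both for N-N) — whereas you give two alternatives: the reduction to MQuadE via $\rmR^c=0$ (which the paper only states as a remark, not as its formal proof, and which outsources the result to \citet{yu2021mquade}) and an explicit $d=2$ witness $\rmH=\mathrm{diag}(1,0)$, $\rmR^h=\rmI$, $\rmR^t=2\rmH$, $\rmR^c=\mathbf{0}$ with the solution family $\rmT_c=\mathrm{diag}(1/2,c)$, which checks out. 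Your explicit construction actually buys something the paper glosses over: it verifies that the right-hand side is consistent (at least one solution exists) and that the kernel meets the symmetric subspace, a point the paper leaves implicit (it can be repaired generally by taking $\rmT_1-\rmT_2 = \vv\vv^{T}$ for $\vv$ in the null space of $\rmR^t-\rmH\rmR^c$); conversely, the paper's rank-sum condition is more informative, since it characterizes which relation embeddings support 1-N/N-1/N-N rather than exhibiting a single instance. Your N-1 and N-N cases are only sketched — note that your 1-N witness uses $\rmR^h=\rmI$, which is full rank, so for N-N you must indeed retune so that both coefficient operators are rank-deficient simultaneously (e.g., $\rmR^h=\rmR^t=\mathrm{diag}(1,0)$, $\rmR^c=\mathbf{0}$ works) — but this is routine and matches the paper's pair of rank conditions.
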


\begin{proof}
    \textbf{1-N relations.} A relation $r$ is 1-N iff there exist two distinct fact triples $(h, r, t_1)$ and $(h, r, t_2)$. 
    Set $s(h,r,t_1)=s(h,r,t_2)=0$,
we get
    \begin{align*}
        \rmH\rmR^h - \rmR^t\mathbf{T_1} + \rmH \rmR^c\mathbf{T_1} = 0, \\
        \rmH\rmR^h - \rmR^t\mathbf{T_2} + \rmH \rmR^c\mathbf{T_2} = 0,
    \end{align*}
    where $\rmH, \mathbf{T_1}, \mathbf{T_2}$ are the embedding matrices of $h, t_1, t_2$, and $\langle \rmR^h, \rmR^t, \rmR^c \rangle$ is the matrix triple of the relation $r$. 

By simple calculations, we have 
    \begin{equation*}
        (-\rmR^t + \rmH \rmR^c)(\mathbf{T_1} - \mathbf{T_2}) = 0.
    \end{equation*}
Now let us set $\operatorname{rank}(\rmR^t)+\operatorname{rank}(\rmR^c)<d$, then 
\[\operatorname{rank}(-\rmR^t + \rmH \rmR^c)\le \operatorname{rank}(\rmR^t)+\operatorname{rank}(\rmR^c)<d,
\]
i.e., $-\rmR^t + \rmH \rmR^c$ is low rank.
Then $\rmT_1$ and $\rmT_2$ can be distinct,
 \textbf{MQuinE} can model 1-N relations.

    \textbf{N-1 relations.} 
    A relation $r$ is N-1 iff there exist two distinct fact triples $(h_1, r, t)$ and $(h_2, r, t)$. Similar to the proof for 1-N relations, we have
    \begin{align*}\label{equ:1}
        \mathbf{H_1}\rmR^h - \rmR^t\rmT + \mathbf{H_1}\rmR^c\rmT = 0, \\
        \mathbf{H_2}\rmR^h - \rmR^t\rmT + \mathbf{H_2}\rmR^c\rmT = 0,
    \end{align*}
    where $\mathbf{H_1}, \mathbf{H_2}, \rmT$ are the embedding matrices of $h_1, h_2, t$.
Then it follows that
    \begin{equation*}
        (\mathbf{H_1} - \mathbf{H_2})(\rmR^h + \rmR^c\rmT) = 0.
    \end{equation*}
Set $\operatorname{rank}(\rmR^h) + \operatorname{rank}(\rmR^c)<d$, 
 \textbf{MQuinE} can model N-1 relations.

    \textbf{N-N relations.}
    Set $\operatorname{rank}(\rmR^t) + \operatorname{rank}(\rmR^c)<d$ and $\operatorname{rank}(\rmR^h) + \operatorname{rank}(\rmR^c)<d$. The conclusion follows.
\end{proof}

\begin{table}[t]
\small
\caption{Statistics of the datasets.}
\label{table:datasets}
\centering
\begin{tabular}{lrrrrr}
\toprule
Dataset & \#Entity & \# Relation & \# Train & \# Valid & \# Test\\
\midrule
FB15k & 14,951 & 1,345 & 483,142 & 50,000 & 59,071\\
FB15k-237 & 14,541 & 237 & 272,115 &17,535 & 20,466\\
WN18 &  40,943 & 18 & 141,442 & 5,000 & 5,000 \\
WN18RR &  40,943  & 11 & 86,835 & 3,034 & 3,134\\
YAGO3-10 & 123,182 & 37 & 1,079,040 &5,000 &5,000\\
CoDEX-L & 77,951 & 69 & 551,193 & 30,622 & 30,622\\
CoDEX-M & 17,050 & 51 & 185,584 & 10,310 & 10,310\\
CoDEX-S & 2,034 & 42 & 32,888 & 1,827 & 1,828 \\
\bottomrule
\end{tabular}
\vspace{-8pt}
\end{table}

\noindent{\bf Theorem 3.2}\; (Compositions). \;
{\textbf{MQuinE} can model the Abelian/non-Abelian compositions of relations.}

\begin{proof}
A relation $r_3$ is a composition of $r_1$ and $r_2$ (denote as $r_3 = r_1 \oplus r_2$) 
iff we have $(e_1,r_1,e_2),(e_2,r_2,e_3) \rightarrow (e_1,r_3,e_3)$. In \textbf{MQuinE}, it requires
    \begin{align*}
        \mathbf{E_1} \rmR^h_1 - \rmR^t_1\mathbf{E_2} + \mathbf{E_1} \rmR^c_1 \mathbf{E_2} = 0, \\
        \mathbf{E_2} \rmR^h_2 - \rmR^t_2\mathbf{E_3} + \mathbf{E_2} \rmR^c_2 \mathbf{E_3} = 0.
    \end{align*}
Rewrite the above two equalities as
    \begin{align*}
        \mathbf{E_1} \rmR^h_1 - 
        (\rmR^t_1- \mathbf{E_1} \rmR^c_1) \mathbf{E_2} = 0, \\
        \mathbf{E_2} (\rmR^h_2 +\rmR^c_2 \mathbf{E_3}) - \rmR^t_2\mathbf{E_3}  = 0.
    \end{align*}
Then it follows that
\begin{align*}
\mathbf{E_1} \rmR^h_1 (\rmR^h_2 +\rmR^c_2 \mathbf{E_3}) =
        (\rmR^t_1- \mathbf{E_1} \rmR^c_1) \mathbf{E_2}(\rmR^h_2 +\rmR^c_2 \mathbf{E_3})
        =(\rmR^t_1- \mathbf{E_1} \rmR^c_1)\rmR^t_2\mathbf{E_3},
\end{align*}
where is equivalent to
\begin{equation*}
\mathbf{E_1} \rmR^h_1 \rmR^h_2  -\rmR^t_1\rmR^t_2\mathbf{E_3}+
\mathbf{E_1}(\rmR^h_1\rmR^c_2 +
 \rmR^c_1\rmR^t_2)\mathbf{E_3}=0.
\end{equation*}

Define $r_3 = r_1 \oplus r_2$ as
    \begin{align*}
        \rmR^h_3 = \rmR^h_1\rmR^h_2, \quad
        \rmR^t_3 = \rmR^t_1\rmR^t_2, \quad
        \rmR^c_3 = \rmR^h_1\rmR^c_2 + \rmR^c_1\rmR^t_2.
    \end{align*}
 Then we know that $(e_1,r_3,e_3)$ holds.

    \textbf{Abelian/Non-Abelian compositions} By definition, if
    \begin{align*}
        \rmR^h_1\rmR^h_2 = \rmR^h_2\rmR^h_1, \quad
        \rmR^t_1\rmR^t_2= \rmR^t_2\rmR^t_1,\quad
\rmR^h_1\rmR^c_2 + \rmR^c_1\rmR^t_2 =
\rmR^h_2\rmR^c_1 + \rmR^c_2\rmR^t_1,
    \end{align*}
   $r_3$ is an Abelian composition,
    otherwise, non-Abelian.
\end{proof}

\section{Dataset details}\label{appendix: dataset details}

\begin{table}[t]
\small
\caption{ The various relations which can be modeled with different matrices.} \label{table:property}
    \centering
         \begin{tabular}{ll} 
         \toprule
        Relation & Relation matrix property \\
        \midrule
        1-N relation &    $\operatorname{rank}(\rmR^t) + \operatorname{rank}(\rmR^c)<d$ \\
        N-1 relation & $\operatorname{rank}(\rmR^h) + \operatorname{rank}(\rmR^c)<d$ \\
        N-N relation &  $\operatorname{rank}(\rmR^t) + \operatorname{rank}(\rmR^c)<d$ \mbox{and} $\operatorname{rank}(\rmR^h) + \operatorname{rank}(\rmR^c)<d$\\
        Symmetric relation & $(\rmR^t)^{T} = \mp \rmR^h$, $ (\rmR^c)^{T} = \pm \rmR^c$\\
        $r_2$ inversion of $r_1$ & 
        $\rmR^h_1 = (\rmR^t_2)^{T}, \rmR^t_1= (\rmR^h_2)^{T}, \rmR^c_1 = -(\rmR^c_2)^{T}$\\
        Compositions $r_3 = r_1 \oplus r_2$ & 
        $\rmR^h_3 = \rmR^h_1\rmR^h_2, 
        \rmR^t_3 = \rmR^t_1\rmR^t_2, 
        \rmR^c_3 = \rmR^h_1\rmR^c_2 + \rmR^c_1\rmR^t_2$
        \\
        $ r_1 \oplus r_2 $ Abelian composition & 
        $\rmR^h_1\rmR^h_2 = \rmR^h_2\rmR^h_1, 
        \rmR^t_1\rmR^t_2= \rmR^t_2\rmR^t_1, 
        \rmR^h_1\rmR^c_2 + \rmR^c_1\rmR^t_2 =
\rmR^h_2\rmR^c_1 + \rmR^c_2\rmR^t_1
        $
        \\
        $ r_1 \oplus r_2 $ Non-Abelian composition &  
        $\rmR^h_1\rmR^h_2 \neq \rmR^h_2\rmR^h_1$, or 
        $\rmR^t_1\rmR^t_2\neq \rmR^t_2\rmR^t_1$, or 
        $\rmR^h_1\rmR^c_2 + \rmR^c_1\rmR^t_2 \neq
\rmR^h_2\rmR^c_1 + \rmR^c_2\rmR^t_1
        $
        \\
        \bottomrule
         \end{tabular}
    \vspace{-8pt}
    \end{table}


Statistics of the benchmark datasets are summarized in \Cref{table:datasets}. We give a brief overview of them in the following:

\begin{table}[t]
    \small
    \caption{The score functions of different KGE models.} \label{table:KGEmethods}
    \centering
         \begin{tabular}{cllc} 
        \toprule
        Model Category & Model & Score function $s(\vh,\vr,\vt)$ & Representation of param    \\
        \midrule
        \multirow{7}{*}{Bilinear}
        & DisMult & $-\left\langle \mathbf{h},\mathbf{r},\mathbf{t}  \right\rangle$ 
        & $\mathbf{h},\mathbf{r},\mathbf{t} \in \mathbb{R}^k$
         \\
        
        & ComplEX & $-\operatorname{Re}( \left\langle \mathbf{h},\mathbf{r},\bar{\mathbf{t}}  \right\rangle$)  
        & $\mathbf{h},\mathbf{r},\mathbf{t}  \in \mathbb{C}^k$
        \\
        & DihEdral & $-\mathbf{h}^{T}\mathbf{R}\mathbf{t}$ 
        & $\mathbf{h},\mathbf{t} \in \mathbb{R}^{2k}, \mathbf{R} \in \mathbb{D}^k_K$
        \\
        & QuatE  &  $- \left\langle \mathbf{h} \otimes \frac{ \mathbf{r} }{\| \mathbf{r} \|},\mathbf{t} \right\rangle$  
        & $\mathbf{h},\mathbf{r},\mathbf{t} \in \mathbb{H}^k $
        \\
        & SEEK  & $\sum_{x,y} \left\langle \mathbf{r}_x,\mathbf{h},\mathbf{t}_{w_{x,y}} \right\rangle   $
        & $\mathbf{h},\mathbf{r},\mathbf{t} \in \mathbb{R}^k $
        \\
        & TuckER  &  $-\mathcal{W}\times_1 \mathbf{h} \times_2 \mathbf{r} \times_3 \mathbf{t}$ 
        & $\mathbf{h},\mathbf{t} \in \mathbb{R}^k, \mathbf{r}\in \mathbb{R}^l, \mathcal{W}\in \mathbb{R}^{k \times l \times k}$
        \\
        \midrule
        Deep learning
        & ConvE  & $g( \operatorname{vec}(g( [\mathbf{h},\mathbf{r}]*\mathbf{w} ))W )\mathbf{t}$  
        & $\mathbf{h},\mathbf{r},\mathbf{t} \in \mathbb{R}^k, \mathbf{w}\in \mathbb{R}^{m_1}, W \in \mathbb{R}^{m_2} $
        \\
        \midrule
        \multirow{6}{*}{Translation-based}
        & TransE & $\| \mathbf{h}+\mathbf{r}-\mathbf{t} \|$ 
        & $\mathbf{h},\mathbf{r},\mathbf{t} \in \mathbb{R}^k$
         \\
        & RotatE & $\| \mathbf{h} \circ \mathbf{r}-\mathbf{t} \|$ 
        & $\mathbf{h},\mathbf{r},\mathbf{t} \in \mathbb{C}^k, \| \mathbf{r}_i \|=1$
         \\
        & OTE & $ \|  \mathbf{h} \Phi(\mathbf{R})-\mathbf{t}    \| $ 
        & \thead{$\mathbf{h},\mathbf{t} \in \mathbb{R}^k, \mathbf{R} = diag\{\mathbf{R}_1,\cdots,\mathbf{R}_s\}$ \\ $\mathbf{R}_i \in \mathbb{R}^{k/s \times k/s}$}
         \\
        & MQuadE & $\|  \rmH\mathbf{R}-\widehat{\mathbf{R}}\rmT      \|_F  $ 
        & \thead{$\mathbf{H},\mathbf{T},\mathbf{R},\mathbf{\widehat{R}} \in \mathbb{R}^{p \times p}$ \\ $\mathbf{H}, \mathbf{T} \text{ are symmetric }$}
        \\  
        \midrule
        Ours & \textbf{MQuinE} & $\|  \rmH\rmR^h-\rmR^t\rmT + \rmH{\rmR^c}\rmT      \|_F $ 
        & \thead{$\rmH,\rmT,\rmR^h, \rmR^t, \rmR^c \in \mathbb{R}^{p \times p}$ \\ $\rmH, \rmT \text{ are symmetric }$}
         \\
        \bottomrule
         \end{tabular}
    \vspace{-8pt}
    \end{table}

\paragraph{FB15k-237.} FB15k-237 is a subset of the Freebase \citep{bollacker2008freebase} knowledge graph which contains 237 relations. The FB15k \citep{bordes2013translating} dataset, a subset of Free-base, was used to build the dataset by \citep{toutanova2015observed} in order to study the combined embedding of text and knowledge networks. FB15k-237 is more challenging than the FB15k dataset because FB15k-237 strips out the inverse relations.

\paragraph{WN18.}  WN18 is a subset of the WordNet \citep{wordnet}, a lexical database for the English language that groups synonymous words into synsets. WN18 contains relations between words such as \emph{hypernym} and \emph{similar\_to}.

\paragraph{WN18RR.} WN18RR is a subset of WN18 that removes symmetry/asymmetry and inverse relations to resolve the test set leakage problem. WN18RR is suitable for the examination of relation composition modeling ability.

\paragraph{YAGO3-10.} YAGO3-10 is a subset of the YAGO knowledge base \citep{mahdisoltani2014yago3} whose entities have at least 10 relations. The dataset contains descriptive relations between persons, movies, places, etc.

\paragraph{CoDEx.} CoDEx \citep{safavi2020codex} is a set of knowledge graph completion datasets extracted from Wiki- data and Wikipedia that improve upon existing knowledge graph completion benchmarks in scope and level of difficulty.

\section{Missing summary table of score functions and properties for knowledge graph embedding models.}\label{appendix: Summary}

We provide a summary \Cref{table:KGEmethods} of score functions and their mathematical forms for different KGE models.

\section{More experimental results}\label{appendix: missing table}

The results of link prediction with the WN18 dataset and some missing results with FB15k-237, WN18RR, and YAGO3-10. And the overall results with CoDEx dataset \citep{safavi2020codex} are presented in \Cref{table:Overall}.

\begin{table}[t]
\small
    \centering
         \begin{tabular}{lccccc} 
        \toprule
        & \multicolumn{5}{c}{\textbf{Metrics}} \\
        \cline{2-6}
        \multirow{2}{*}{\textbf{Models}}
        &\multirow{2}{*}{\textbf{MRR}} & \multirow{2}{*}{\textbf{MR}} & \multicolumn{3}{c}{\textbf{Hits@N}} \\
         && & \textbf{1} & \textbf{3} & \textbf{10} \\
        \midrule
        DisMult &0.241&254&0.155&0.263&0.419 \\
        ComplEX &0.247&339&0.158&0.275&0.428 \\
        DihEdral &0.320&-&0.230&0.353&0.502 \\
        QuatE &0.311&176&0.221&0.342&0.495 \\
        TuckER &0.353&{162}&0.260&0.387&0.536 \\
        ConvE &0.325&224&0.237&0.356&0.501 \\
        \midrule
        TransE &0.294 &357 & - & - &0.465 \\
        RotatE &0.336&177&0.241&0.373&0.530 \\
        BoxE &0.337 &163 & 0.238 & 0.374 &0.538 \\
        OTE &0.351&- &0.258 &0.388 &0.537 \\
        MQuadE &0.356 &174 &0.260 &0.392 &0.549 \\
        \midrule
        \textbf{MQuinE} & \textbf{0.420} & \textbf{109} & \textbf{0.332} & \textbf{0.440} & \textbf{0.588} \\
        \bottomrule
         \end{tabular}
         \vskip 0.10in
         \caption{ Overall evaluation results on the FB15k-237 dataset.} \label{table:FB15k-237}
    \vspace{-10pt}
    \end{table}

\begin{table}[t]
\small
    \centering
         \begin{tabular}{lccccc} 
        \toprule
        & \multicolumn{5}{c}{\textbf{Metrics}} \\
        \cline{2-6}
        \multirow{2}{*}{\textbf{Models}}
        &\multirow{2}{*}{\textbf{MRR}} & \multirow{2}{*}{\textbf{MR}} & \multicolumn{3}{c}{\textbf{Hits@N}} \\
         && & \textbf{1} & \textbf{3} & \textbf{10} \\
        \midrule
        DisMult &0.443 & 4999 & 0.403 & 0.453 & 0.534\\
        ComplEX & 0.472 & 5702 & 0.432 & 0.488 & 0.550\\
        DihEdral &0.486&-&0.443&0.505&0.557 \\
        TuckER &0.470&-&0.443&0.482&0.526 \\
        ConvE &0.430 & - & 0.400 &0.440  &0.520 \\
        \midrule
        TransE &0.466&-&0.422&-&0.555 \\
        RotatE &0.476&{3340}&{0.428}&0.492&0.571 \\
        BoxE & 0.451 & 3207 & 0.400 & 0.472 & 0.541 \\
        MQuadE &0.426&6114&0.427&0.462&0.564 \\
        \midrule
        \textbf{MQuinE}&\textbf{0.492}&\textbf{2599}&\textbf{0.454}&\textbf{0.518}&\textbf{0.603} \\
        \bottomrule
         \end{tabular}
         \vskip 0.10in
         \caption{ Overall evaluation results on the WN18RR dataset.} \label{table:WN18rr}
\vspace{-10pt}
\end{table}

\begin{table}[t]
\small
    \centering
         \begin{tabular}{lccccc} 
        \toprule
        & \multicolumn{5}{c}{\textbf{Metrics}} \\
        \cline{2-6}
        \multirow{2}{*}{\textbf{Models}}
        &\multirow{2}{*}{\textbf{MRR}} & 
        \multirow{2}{*}{\textbf{MR}} & 
        \multicolumn{3}{c}{\textbf{Hits@N}} \\
         & & & \textbf{1} & \textbf{3} & \textbf{10} \\
        \midrule
        DisMult &0.340 &5926 &0.240&0.380&0.540 \\
        ComplEX &0.360 &6351 &0.260&0.400&0.550 \\
        DihEdral &0.472 & - &0.381&0.523&0.643 \\
        TuckER &0.527&3306&0.446&0.576&0.676 \\
        ConvE &0.520&2792&{0.450}&0.560&0.660 \\
        \midrule
        RotatE &0.495&1767 &0.402&0.550&0.670 \\
        BoxE & \textbf{0.567} & 1164 &\textbf{0.494} &0.611&0.699 \\
        MQuadE &{0.536}&1337  &0.449 &0.582 &0.689 \\
        \midrule
        \textbf{MQuinE} 
&{0.566}&\textbf{992}&{0.492}&\textbf{0.629}&\textbf{0.711}
        \\
        \bottomrule
         \end{tabular}
         \vskip 0.10in
         \caption{ Overall evaluation results on the YAGO3-10 dataset.} \label{table:YAGO}
         \vspace{-10pt}
    \end{table}

\begin{table}[t]
\small
    \centering
         \begin{tabular}{lccccc} 
        \toprule
        & \multicolumn{5}{c}{\textbf{Metrics}} \\
        \cline{2-6}
        \multirow{2}{*}{\textbf{Models}}
        &\multirow{2}{*}{\textbf{MRR}} & \multirow{2}{*}{\textbf{MR}} & \multicolumn{3}{c}{\textbf{Hits@N}} \\
         && & \textbf{1} & \textbf{3} & \textbf{10} \\
        \midrule
        ComplEX &0.941&-&0.936&0.945&0.947 \\
        DihEdral &0.946&-&0.942&0.949&0.954 \\
        TuckER &0.953&-&\textbf{0.949}&0.955&0.958 \\
        ConvE &0.943&374 &0.935 &0.946 &0.956 \\
        \midrule
        TransE &-&263&-&-&0.754 \\
        RotatE &{0.949}&\textbf{09}&{0.944}&{0.952}&0.959 \\
        MQuadE &0.897&268&0.893&0.926&0.941 \\
        \midrule
        \textbf{MQuinE} & \textbf{0.958} & 189 &{0.937}&\textbf{0.957}&\textbf{0.975} \\
        \bottomrule
         \end{tabular}
         \vskip 0.10in
         \caption{ Results of link prediction on the WN18 dataset.} \label{table:WN}
    \vspace{-10pt}
    \end{table}

\begin{table}[t]
\small
    \centering
         \begin{tabular}{lcccc} 
        \toprule
        \multirow{2}{*}{\textbf{Models}} 
        & \multicolumn{2}{c}{\textbf{CoDEx-S}}
        & \multicolumn{2}{c}{\textbf{CoDEx-M}}
        \\
        \cline{2-5}
        & Acc & F1-score
        & Acc & F1-score
        \\
        \midrule
        RESCAL 
        & 0.843 & 0.852
        & 0.818 & 0.815\\
        TransE 
        & 0.829 & 0.837
        & 0.797 & 0.803
         \\
        ComplEx 
        & 0.836 & 0.846
        & 0.824 & 0.818
         \\
        ConvE 
        & 0.841 & 0.846
        & 0.826 & \textbf{0.829}
         \\
        TuckER 
        & 0.840 & 0.846
        & 0.823 & 0.816
         \\
        \midrule
        \textbf{MQuinE} 
        & \textbf{0.876} & \textbf{0.883}
        & \textbf{0.831} & 0.828
        \\
        \bottomrule
         \end{tabular}
         \vskip 0.10in
         \caption{ {Overall evaluation results on the CoDEx datasets for triple classification.} } \label{table:tc}
    \vspace{-10pt}
    \end{table}


\section{Implementation details}

We introduce the training details of our model and baselines in this section. All experiments are conducted on a machine with 8 3080Ti GPUs with 12GB memory. 

\subsection{Our model}

We present the implementation details on the training of MQuinE used in \Cref{sec:exp}.

We initialized the element of entity matrix from the normal distribution $\mathcal{N}(0,0.01)$ and the relation matrices $\rmR^h,\rmR^t,\rmR^c$ are initiated as the identity matrix. We apply grid search to find the best hyper-parameters of MQuinE. The tuning ranges of hyper-parameters are as follows: the number of Z-samples $k \in \{10,32,64\}$, the dimension of entity and relation matrices $d \in \{20,25,32,35,42\}$, the batch size $b \in \{256,512,1024\}$, the self-adversarial temperature $\alpha \in \{0.5,1\}$, the fixed margin $\gamma \in \{6,9,12,15,21,24\}$, the number of negative samples $m \in \{128,256,512,1024\}$, the initial learning rate $\eta \in \{10^{-3},10^{-4},10^{-5}\}$, the regularization coefficient 
$\lambda_{\mathrm{reg}} \in \{10^{-4},5\times 10^{-4},10^{-3},10^{-2}, 10^{-1}\}$, the negative sampling coefficient $\lambda_{\mathrm{neg}} \in \{0.5,1.0,2.0\}$.

The best hyperparameters on each dataset are given in \Cref{tab:parameters}.

\begin{table}[t]
\small
    \centering
    \begin{tabular}{|l|c|c|c|c|c|c|c|c|}
         \hline 
         Dataset & $k$ & $d$ & $b$ & $\alpha$ &  $\gamma$ & $m$ & $\lambda_{\mathrm{reg}}$ & $\lambda_{\mathrm{neg}}$   \\
\hline 
FB15k-237 & 32 & 38 & 1024 & 0.5 & 12.0 & 256 & 0.01 & 1.0\\ \hline 
WN18 & 64 & 35 & 1024 & 1.0 & 9.0 & 256 & 5e-3 & 1.0\\
\hline 
WN18RR & 64 & 35 & 1024 & 0.5 & 12.0 & 512 & 0.01 & 1.0 \\ \hline 
YAGO3-10 & 64 & 18 & 4096 & 1.0 & 32.0 & 512 & 5e-3 & 1.0\\ \hline
CoDEx-S & 32 & 32 & 1024 & 0.5 & 12.0 & 256 & 0.01 & 1.0\\ \hline
CoDEx-M & 32 & 32 & 1024 & 0.5 & 12.0 & 256 & 0.01 & 1.0\\ \hline
CoDEx-L & 32 & 32 & 1024 & 0.5 & 12.0 & 256 & 5e-3 & 1.0\\
\hline
    \end{tabular}
    \vskip 0.10in
    \caption{The best hyperparameters of MQuinE on four datasets in our experiments. }
    \label{tab:parameters}
    \vspace{-10pt}
\end{table}

\newpage
\subsection{Baselines}
We present the details of our re-implementation of the baseline methods in \Cref{sec:exp}. We follow the implementation of 
RotatE\footnote{\url{https://github.com/DeepGraphLearning/KnowledgeGraphEmbedding}} and use the best configuration of TransE, RotatE, DisMult, and ComplEX reported\footnote{\url{https://github.com/DeepGraphLearning/KnowledgeGraphEmbedding/blob/master/best\_config.sh}}. We set the number of Z-sampling $k$ to 32 for all the datasets in \Cref{table: ablation}.






\end{document}